\documentclass[letterpaper]{article}
\usepackage{authblk} 

\usepackage{graphicx}          
\usepackage{amsmath,amssymb,amsfonts}
\usepackage{subcaption}
\usepackage{textcomp}

\newenvironment{proof}{%
  \vspace{-1em}\noindent\textit{Proof.}\ }%
  {\hfill$\square$\vspace{0.2em}}

\newcommand{\myvec}[1]{\boldsymbol{#1}}

\def\BibTeX{{\rm B\kern-.05em{\sc i\kern-.025em b}\kern-.08em
T\kern-.1667em\lower.7ex\hbox{E}\kern-.125emX}}

\newtheorem{lemma}{Lemma}
\newtheorem{proposition}{Proposition}
\newtheorem{remark}{Remark}

\title{Game-Theoretic Area Coverage Control with Cooperative-Adversarial Multi-Agent Systems\thanks{Preprint submitted to Automatica.}} 

\author[1]{Ruiming Zheng}
\author[1]{Mohammad Pirani}
\author[1]{Davide Spinello}

\affil[1]{\small Department of Mechanical engineering, University of Ottawa, Ottawa, ON\newline Email: \{rzhen014,mpirani,dspinello\}@uottawa.ca}

\begin{document}
\maketitle 

\begin{abstract}                       
We formulate a multi-agent area coverage control problem as a two-player zero-sum game between two agent groups with conflicting goals. Conventional coverage control allocates resources based on an environmental risk density field. In contrast, we generalize this metric by allowing a second group of adversarial agents to generate the spatial risk field. Coupled agent dynamics are linked through the area coverage metric, which functions as the game reward. This framework induces coupled gradient-descent-ascent controllers for the groups. Analysis of a low-dimensional case reveals a Hopf bifurcation dictated by the ratio of the groups' control gains. In the regime dominated by adversarial agents, the system is driven into a periodic chase–evasion cycle. In the regime dominated by ordinary agents, the system converges to a fixed configuration. Numerical simulations validate these theoretical insights. Finally, we characterize the Nash equilibrium conditions. Under this equilibrium, ordinary agents converge to a generalized centroidal Voronoi tessellation, whereas adversarial agents settle at their corresponding equilibrium centroids.
\end{abstract}

\section{Introduction}

Advancements in communication technology, computational power, and control theory have led to the adoption of multi-agent systems in various engineering applications. Using multiple agents, these systems can perform complex cooperative tasks that are difficult for a single agent to accomplish, such as harbor protection~\cite{simettiUseTeamUSVs2010,MiahSuruz2014NDoA}, coordinated surveillance~\cite{pimentaDecentralizedControllersPerimeter2013,milellaActiveSurveillanceDynamic2010}, and search and rescue missions~\cite{huMultiagentInformationFusion2013,pothalNavigationMultipleMobile2015}. However, with the popularization of multi-agent technology, research in this area has expanded beyond cooperation to include competitive scenarios, such as pursuit-evasion~\cite{BilginAhmetTunc2015Aatm}, perimeter defense~\cite{ShishikaDaigo2020ARoM}, and resilience to cyber-attacks~\cite{PiraniMohammad2023Gafa}.

In addressing the optimal deployment of the agents for the coverage problem, the environment is partitioned into regions that are ``best served'' by the agents to which they are assigned, resulting in a Voronoi tessellation of the workspace. The Voronoi diagram has a wide range of applications and a large number of variants; see~\cite{aurenhammerVoronoiDiagramsPartiallySupported2000,duAccelerationSchemesComputing2006,xiaoPedestrianFlowModel2016,wangFundamentalDiagramsPedestrian2019,xiaoInvestigationVoronoiDiagram2018,sudRealTimePathPlanning2008,degyvesProximityQueriesCrowd2013}. In essence, it is a partition of a space into sub-regions based on an ``effect'' from a given set of points. In the context of area coverage control for networked mobile sensors, this effect refers to the sensing performance functions of the agents. Cortes et al. introduced the notion of sensor coverage and utilized the idea behind Lloyd's algorithm~\cite{lloydLeastSquaresQuantization1982} to design the area coverage control for networked mobile sensors with time-invariant risk densities, which is a gradient descent search for the fixed points of the coverage metric. Importantly, when applied to the search for the optimal formations of mobile agents that maximize the coverage metric, Lloyd’s algorithm converges to centroidal Voronoi tessellations, in which agents occupy the centroids of cells forming a Voronoi partition of the workspace.

In practice, many phenomena cannot be modeled with static density fields, which motivates research on area coverage in time-varying environments. The resulting problem is non-autonomous due to the drift induced by the evolving environment. In such cases, Lloyd’s method is no longer applicable because the objective is no longer an optimal regulator problem—searching for fixed points of an autonomous coverage metric—but rather an optimal tracking problem, in which the optimal configuration evolves over time due to the time dependence of the risk field. In~\cite{leeControlledCoverageUsing2013} this is addressed by providing a solution to non-autonomous coverage control by proposing a feedback and feedforward control that is proven to track the trajectories of time-varying Voronoi centroids starting from a centroidal configuration. A generalization is proposed in~\cite{miahGeneralizedNonautonomousMetric2017}, with the development of a generalized Lloyd’s algorithm that unifies the transient and steady-state tracking of centroids varying over time, allowing agents to asymptotically achieve optimal configurations for a non-autonomous coverage metric. An application of this framework to evolving and diffusive environments is presented in~\cite{miahNonAutonomousCoverageControl2017}. Examples of using risk density to control the formation and action of agents can be found in~\cite{OurWorkCDC2024,luoDistributedCoordinationMultiagent2020,abdulghafoorDistributedCoverageControl2021}. In~\cite{soleymaniOptimalNonautonomousArea2023}, a model-free coverage control is proposed using adaptive dynamic programming, utilizing a reinforcement learning control that relies on neural networks to interpolate local data gathered and shared by each agent instead of a global model of the system. 

In this work, we reinterpret area coverage control as a two-player zero-sum game. Classical formulations assume that the risk density is predefined and is unaffected by the agents' behavior. However, in realistic scenarios, adversaries can observe and adapt to surveillance efforts by choosing strategies to avoid detection, such as pursuit-evasion games~\cite{garciaMultiplePursuerMultiple2021} and perimeter-defense games~\cite{leePerimeterdefenseGameAerial2020}. Motivated by this limitation, we consider adversaries that actively respond to the agents’ deployment and influence the coverage outcome.

The main contributions of this paper are summarized as follows:

\begin{itemize}
    \item We reformulate the area coverage control problem as a two-player zero-sum game between ordinary agents and adversarial agents, where the two populations pursue conflicting objectives with respect to a common coverage performance index. In contrast to existing formulations with agent-independent risk distributions, the proposed framework explicitly accounts for adaptive adversarial actions, thereby capturing key features of realistic surveillance and defense scenarios.
    

    \item We propose a pair of coupled gradient descent-ascent controllers, consisting of a Lloyd's gradient descent for the ordinary agents and a gradient ascent strategy for the adversarial agents. Through the analysis of a low dimensional case, we show that the resulting closed-loop system undergoes a Hopf bifurcation as the control gain ratio crosses a critical threshold, leading to a transition from periodic chase-evasion dynamics to stable fixed spatial configurations. Simulations of the full-dimensional system exhibit consistent behavior.
    
    \item We characterize the spatial configuration of the Nash equilibrium of the game, under which the ordinary agents form a centroidal Voronoi tessellation while the adversarial agents converge to their corresponding equilibrium points. In addition, we establish sufficient conditions for the existence of such an equilibrium.
\end{itemize}

\section{Problem Formulation}

Consider two sets of mobile agents of respective cardinality $M$ and $N$. The evolution of the two groups is dictated by conflicting strategies, as detailed below, and therefore we label them as ``ordinary'' and ``adversaries'', respectively. The two sets are networked with an underlying communication network structure, which in both cases is assumed to be fully connected and each node is assumed to have non-corrupted and updated knowledge of the state of the other nodes in the respective network. The reason for this strong assumption is to focus on the game theoretical formulation of the area coverage control problem, decoupling it from the technical treatment of more general cases of communication network topologies. Both sets are deployed in a bounded, convex two-dimensional spatial region $\Omega \in \mathbb{R}^2$. The collective state of the ordinary agents is denoted by $\myvec{P}(t):=[\myvec{p}_{1}(t),\dots,\myvec{p}_{M}(t)]^\top\in \mathbb{R}^{2M}$ and that of the adversaries by $\myvec{S}(t):=[\myvec{s}_{1}(t),\dots,\myvec{s}_{N}(t)]^\top\in \mathbb{R}^{2N}$, where $\myvec{p}_{i}(t)=[x^{p}_{i}(t), y^{p}_{i}(t)]^\top$, $\myvec{s}_{j}(t)=[x^{s}_{j}(t), y^{s}_{j}(t)]^\top$ are the positions of the $i$-th agent and $j$-th adversary, respectively.

The agents evolve according to the single integrator dynamics
\begin{subequations}\label{eq:integrators}
\begin{align}
\dot{\myvec{p}}_{i}(t) = \myvec{u}^{p}_{i}(t) \ \text{for} \ i\in\{1,\dots,M\}\\
\dot{\myvec{s}}_{j}(t) = \myvec{u}^{s}_{j}(t) \ \text{for} \ j\in\{1,\dots,N\}
\end{align}
\end{subequations}%
with velocity control inputs $\myvec{u}^{p}_{i}(t)=[u^{p}_{x i}(t), u^{p}_{\mathrm{yi}}(t)]^\top$ and $\myvec{u}^{s}_{j}(t)=[u^{s}_{x j}(t), u^{s}_{\mathrm{yj}}(t)]^\top$.

In area coverage control problems (see, e.g.,~\cite{cortesCoverageControlMobile2004,leeControlledCoverageUsing2013,miahGeneralizedNonautonomousMetric2017}), optimal spatial distributions of the agents correspond to centroidal Voronoi partitions with cells assigned to each agents. Formally, given the states $\myvec{P}$, the workspace $\Omega$ is partitioned into a set of corresponding Voronoi cells, such that $\Omega=\bigcup_{i=1}^{M}\mathcal{V}_i{(\myvec{P})}$. The Voronoi cell $\mathcal{V}_i$ assigned to the $i$-th agent $\myvec{p}_i$ is calculated by
\begin{align}
\mathcal{V}_i=\left\{\myvec{q} \in \Omega \mid f\left(r_i\right) \leq f\left(r_j\right), \forall{j \neq i}\right\} \label{eq:Voronoi}
\end{align}
where $\myvec{q}=(x,y)^\top$ is a point in the workspace $\Omega$, $r_i=\Vert \boldsymbol{q} - \boldsymbol{p}_i \Vert$ is the Euclidean distance between agent $\myvec{p}_i$ and $\myvec{q}$, and $f(\cdot)$ is the 
agents' performance function, which is chosen to be a smooth, monotonic function in the argument, so that the Voronoi partition results in non-overlapping polygonal cells~\cite{miahGeneralizedNonautonomousMetric2017}. In this work, we follow the setting in \cite{cortesCoverageControlMobile2004} and adopt the quadratic performance model
\begin{align}
f(r_i)={r_i}^2 \ \text{for} \ i\in\{1,\dots,M\} \label{eq:f}
\end{align}
so that a lower $f(\cdot)$ indicates better sensing performance. The definition in~\eqref{eq:Voronoi} induces order-1 Voronoi partitions, in which the agents are in one to one correspondence with the cells. Higher-order Voronoi diagrams have been used to improve redundancy and fault tolerance~\cite{liHigherOrderNonAutonomousOptimal2024}, while other interesting variants have been proposed for coverage control in cluttered environments~\cite{abdulghafoorDistributedCoverageControl2021}, where Voronoi cells are truncated at obstacles to generate collision-free trajectories.

The presence of the adversary agents in the workspace can be modeled in terms of a region of influence associated with each adversary,  which is coupled with the dynamics of the ordinary agents, in the sense that the two groups have different strategies that mutually affect the dynamics of the two networks. Following the setting in~\cite{miahGeneralizedNonautonomousMetric2017,MiahSuruz2014NDoA}, we adopt a Gaussian function to model the region of influence of each adversary $\myvec{s}_j$
\begin{align}
\phi_{j}(\myvec{s}_j;\myvec{q})=\frac{\exp \left(-\frac{1}{2}\left(\myvec{q}-\myvec{s}_{j}\right)^\top \myvec{\Sigma}^{-1}\left(\myvec{q}-\myvec{s}_{j}\right)\right)}{2 \pi\sqrt{ \operatorname{det} \myvec{\Sigma}}} \label{eq:phij}
\end{align}
The covariance matrix has the diagonal structure  $\myvec{\Sigma}=\sigma^2 \myvec{I}_2$. Increasing the covariance $\sigma$ broadens the risk density function~\eqref{eq:phij}, reduces its peak height, and yields a flatter profile. The collective risk density $\bar{\phi}$ is given by
\begin{align}
\bar{\phi}(\myvec{S};\myvec{q})=
\sum_j\phi_j(\myvec{s}_j;\myvec{q}) \ \text{for} \ j\in\{1,\dots,N\}
\label{eq:phibar}
\end{align}

In optimal area coverage control, the performance index is the coverage metric, which encodes the additive contribution of the sensing performance of the agents, weighted by a risk density function that quantifies in a spatially distributed sense the relative importance associated to each point in the workspace. A non-uniform risk distorts the distribution of agents, allocating more resources in areas with higher risk~\cite{cortesCoverageControlMobile2004,miahGeneralizedNonautonomousMetric2017,MiahSuruz2014NDoA}. In this work, we generalize and re-interpret this idea, by considering the risk density as induced by the evolution of the adversaries. However, as opposed to existing literature, the adversaries have a strategy that may be partially known to the ordinary agents, and therefore the dynamics of the two groups couples via the area coverage metric as follows
\begin{align}
\mathcal{H}(\myvec{P},\myvec{S},\mathcal{V})=\sum_{i=1}^{M}\int_{\myvec{\mathcal{V}_i}}f(r_{i})\bar{\phi}(\myvec{S}) \ d\Omega \label{eq:H}
\end{align}
where $f(r_i)$ is the sensing performance function of the $i$-th agent~\eqref{eq:f}, and $\bar{\phi}$ is the collective risk density function~\eqref{eq:phibar} determined by the states of the adversaries. This formulation is adapted from the locational optimization problems in~\cite{cortesCoverageControlMobile2004}, which introduced the notion of sensor coverage and formalized the optimal sensor placement problem in non-uniform environments. In this work, a lower value of coverage metric $\mathcal{H}$ corresponds to better coverage, since~\eqref{eq:f} is a monotonically increasing function of the Euclidean distance $r_i$ between the agent's state $\myvec{p}_i$ and a point of interest $\myvec{q}$.

Since the two sets pursue conflicting objectives with respect to a common performance index, the problem is formulated as a two-player zero-sum Nash game, posed as a maximin optimization.

Due to the fact that the performance index~\eqref{eq:H} encodes the dynamics of two networks with conflicting strategies, the area coverage function is a game payoff. In this sense, the goal of the ordinary agents is to find an optimal spatial configuration that minimizes $\mathcal{H}$, thereby maximizing coverage performance, while the adversaries aim to position themselves to maximize $\mathcal{H}$ and degrade the coverage within the domain $\Omega$. This interaction is modeled as a two-player zero-sum Nash game, formalized by the following maximin optimization statement
\begin{align}
\max_{\myvec{S}\in\Omega}\min_{\myvec{P}\in\Omega}\mathcal{H}(\myvec{P},\myvec{S},\mathcal{V})\ \text{subject to~\eqref{eq:integrators}} 
\label{eq:maxminH}
\end{align}


\section{Control Design and Stationary Points}
\subsection{Gradient-based Controls}

A common approach for solving the maximin problem~\eqref{eq:maxminH} is the Gradient Descent–Ascent (GDA) algorithm~\cite{pmlr-v119-jin20e}, in which the ordinary and adversarial agents make control decisions that respectively descend and ascend along the gradient of~\eqref{eq:H}, such that the ordinary agents minimize while the adversaries maximize the game payoff, thereby achieving their conflicting objectives as stated in~\eqref{eq:maxminH}.

The derivative of the coverage metric $\mathcal{H}$ with respect to $i$-th agent's state $\myvec{p}_i$ is given by~\cite{miahGeneralizedNonautonomousMetric2017}
\begin{align}
\frac{\partial \mathcal{H}}{\partial \myvec{p}_i}=\int_{\mathcal{V}_i}  -2 \frac{\partial f\left(r_i\right)}{\partial {r_i}^2}\left(\myvec{q}-\myvec{p_i}\right) \bar{\phi}(\myvec{S};\myvec{q}) \ d\Omega
\label{eq:gradH1}
\end{align}
with $\partial f\left(r_i\right)/\partial {r_i}^2=1$ from \eqref{eq:f}. The mass and centroid for the $i$-th Voronoi cell are given by $m_i=\int_{\mathcal{V}_i} \bar{\phi} \ d \Omega$ and $\myvec{c}^p_i=\int_{\mathcal{V}_i} \myvec{q} \bar{\phi} \ d\Omega/m_i$~\cite{cortesCoverageControlMobile2004}, thus the gradient \eqref{eq:gradH1} can be written as
\begin{align}
\frac{\partial \mathcal{H}}{\partial \myvec{p}_i}=-2m_i(\myvec{c}^p_i-\myvec{p}_i)
\label{eq:gradH}
\end{align}
The control law used by the ordinary agents to minimize the metric $\mathcal{H}$ is
\begin{align}
\myvec{u}^{p}_i=k^p(\myvec{c}^p_i-\myvec{p}_i)
\label{eq:up}
\end{align}
with control gain $k^p>0$. This control law is also referred to as Lloyd's algorithm~\cite{cortesCoverageControlMobile2004}, which is a gradient descent method that iteratively seeks a local minimum of the coverage metric $\mathcal{H}$. When applied to the problem of optimal agent formation for coverage optimization, Lloyd’s algorithm converges to a centroidal Voronoi tessellation (CVT), in which each agent occupies the centroid of its associated Voronoi cell. The asymptotic stability of the trajectories generated by~\eqref{eq:up} in static environment has been established in \cite{cortesCoverageControlMobile2004} using LaSalle's invariance principle.

In order to achieve the objective of maximizing the coverage metric $\mathcal{H}$, the adversaries use a gradient ascent control. The derivative of $\mathcal{H}$ with respect to the $j$-th adversary's state $\boldsymbol{s}_j$ yields
\begin{align}
\frac{\partial \mathcal{H}}{\partial \myvec{s}_j}=
\sum_{i=1}^{M}\int_{\mathcal{V}_i} f(r_i)\phi_j\myvec{\Sigma}^{-1}(\myvec{q}-\myvec{s}_j) d\Omega
\label{eq:gradHs2}
\end{align}
where $\myvec{\Sigma}$ is the covariance matrix in~\eqref{eq:phij}.

We define the weighted risk density $\hat{\phi}_{ij}=f(r_i)\phi_j(\myvec{s}_j,\myvec{q})$, which can be interpreted as the risk density $\phi_j$ introduced by the $j$-th adversary weighted by the performance function $f(r_i)$ of the $i$-th agent. In analogy to the mass and centroid definition introduced above for the Voronoi cells, we define the mass and the centroid associated to $\hat{\phi}_{ij}$ as $\hat{m}_{ij}=\int_{\mathcal{V}_i}\hat{\phi}_{ij}\ d \Omega $ and $\hat{\myvec{c}}_{ij}=\int_{\mathcal{V}_i}\myvec{q}\hat{\phi}_{ij}\ d\Omega/\hat{m}_{ij}$, respectively.

With these definitions, the gradient in~\eqref{eq:gradHs2} can be rewritten as
\begin{align}
\frac{\partial \mathcal{H}}{\partial \myvec{s}_j}&=
\sum_{i=1}^{M}\hat{m}_{ij}\myvec{\Sigma}^{-1}(\myvec{c}^{s}_{j}-\myvec{s}_j)
\label{eq:gradHs3}
\end{align}
where $\boldsymbol{\Sigma}^{-1}$ is positive definite, and $\hat{m}_{ij}>0$. The details of the derivation are given in the Appendix A, with
\begin{align}
    \myvec{c}^s_j=\frac{\sum_{i=1}^{M}\hat{m}_{ij}\hat{\myvec{c}}_{ij}}{\sum_{i=1}^{M}\hat{m}_{ij}}
    \label{eq:csj}
\end{align}

In order to maximize the metric $\mathcal{H}$, the adversaries evolve along the trajectories generated by the inputs
\begin{align}
\myvec{u}^{s}_j=k^s(\myvec{c}^{s}_j-\myvec{s}_j) \ \text{for} \ i\in\{1,\dots,N\}
\label{eq:us}
\end{align}
with control gain $k^s>0$.

\subsection{System's Equilibria}

The ordinary agents and the adversaries evolve according to control laws~\eqref{eq:up} and~\eqref{eq:us}, respectively, with the ordinary agents seeking to minimize the coverage metric $\mathcal{H}$ while the adversaries aim to maximize it. The control inputs of the two groups are inherently coupled: the ordinary agents move toward the centroidal Voronoi tessellation weighted by the risk density $\bar{\phi}$, which is itself influenced by the spatial distribution of the adversaries. Conversely, the adversaries adjust their positions to increase $\mathcal{H}$, thereby altering the risk distribution. This adversarial motion shifts the Voronoi centroids $\myvec{c}^p_i$, increasing the Euclidean distances between the agents’ positions $\myvec{p}_i$ and their corresponding Voronoi centroids. As a result, the agents must exert greater control effort, as dictated by~\eqref{eq:up}, to track the moving centroidal Voronoi configuration.

Here, we show that the dynamical system governed by the coupled control laws~\eqref{eq:up} and~\eqref{eq:us} converges to a fixed configuration under a certain ratio of control gains $k_p/k_s$.
\begin{proposition}
    Let $\gamma = k_p/k_s$ be the ratio between the gains of the controls in~\eqref{eq:up} and~\eqref{eq:us}. Define the critical gain ratio as
\begin{align}\label{eq:gammac}
\gamma_c=\frac{
\sum_{j=1}^{N}\sum_{i=1}^{M}\hat{m}_{ij}\left\|\myvec{c}^{s}_{j}-\myvec{s}_j\right\|^2_{\myvec{\Sigma}^{-1}}}{\sum_{i=1}^{M}m_i\|\myvec{c}^{p}_i-\myvec{p}_i\|^2
}
\end{align}
where $\|\cdot\|^2_{\myvec{\Sigma}^{-1}}$ is the squared norm weighted by $\myvec{\Sigma}^{-1}$. If $\gamma > \gamma_c$, then the ordinary agents~$\myvec{P}$ and adversaries~$\myvec{S}$ converge to the largest invariant set 
\begin{subequations}\label{eq:equilibrium}
    \begin{align}
    &\myvec{p}_i=\myvec{c}^{p}_i\ \text{for}\ i=1,\dots M \label{eq:CVT} \\
    &\myvec{s}_j=\myvec{c}^{s}_j\ \text{for}\ j=1,\dots N 
    \end{align}
\end{subequations}
\end{proposition}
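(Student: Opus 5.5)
The plan is to use the coverage metric $\mathcal{H}$ itself as a Lyapunov-like function. We will compute its time derivative along the closed-loop trajectories generated by~\eqref{eq:up} and~\eqref{eq:us}, and then apply LaSalle's invariance principle, as in the static case of~\cite{cortesCoverageControlMobile2004}.

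\textbf{Step 1: the time derivative of $\mathcal{H}$.} Write $\mathcal{H}=\mathcal{H}(\myvec{P},\myvec{S})$ with the Voronoi partition induced by $\myvec{P}$. The boundary terms from moving cell boundaries cancel because $f(r_i)=f(r_k)$ on shared edges; this is the standard argument behind~\eqref{eq:gradH1}. Therefore
\begin{align*}
\dot{\mathcal{H}}=\sum_{i=1}^{M}\frac{\partial \mathcal{H}}{\partial \myvec{p}_i}^{\top}\dot{\myvec{p}}_i+\sum_{j=1}^{N}\frac{\partial \mathcal{H}}{\partial \myvec{s}_j}^{\top}\dot{\myvec{s}}_j .
\end{align*}
Substituting~\eqref{eq:gradH},~\eqref{eq:gradHs3},~\eqref{eq:up} and~\eqref{eq:us} gives
\begin{align*}
\dot{\mathcal{H}}=-2k^p\sum_{i=1}^{M}m_i\|\myvec{c}^p_i-\myvec{p}_i\|^2+k^s\sum_{j=1}^{N}\sum_{i=1}^{M}\hat{m}_{ij}\|\myvec{c}^s_j-\myvec{s}_j\|^2_{\myvec{\Sigma}^{-1}}.
\end{align*}
Dividing by $k^s$ and using the definition of $\gamma_c$ in~\eqref{eq:gammac},
\begin{align*}
\dot{\mathcal{H}}=k^s\Big(\gamma_c-2\gamma\Big)\sum_{i=1}^{M}m_i\|\myvec{c}^p_i-\myvec{p}_i\|^2 .
\end{align*}
The hypothesis $\gamma>\gamma_c$ then yields $\dot{\mathcal{H}}<0$ whenever some $\myvec{p}_i\neq\myvec{c}^p_i$. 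The factor $2$ only makes the inequality easier to satisfy, since $\gamma>\gamma_c$ implies $2\gamma>\gamma_c$.

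\textbf{Step 2: boundedness and LaSalle.} Trajectories stay in the compact set $\Omega^{M}\times\Omega^{N}$. Centroids of cells in the convex set $\Omega$ lie in $\Omega$, and $\myvec{c}^s_j$ is a convex combination of points of $\Omega$. So each control points into $\Omega$, and $\Omega$ is forward invariant. Also, $\mathcal{H}$ is continuously differentiable and bounded below by $0$. Since $\dot{\mathcal{H}}\le 0$ on the region where the hypothesis holds, LaSalle's principle says trajectories converge to the largest invariant set contained in $\{\dot{\mathcal{H}}=0\}$.

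\textbf{Step 3: identify the invariant set.} On $\{\dot{\mathcal{H}}=0\}$ we have $\sum_i m_i\|\myvec{c}^p_i-\myvec{p}_i\|^2=0$. Because $m_i>0$ (the Gaussian density is strictly positive), this forces~\eqref{eq:CVT}. The numerator of $\gamma_c$ must then also vanish, since otherwise $\dot{\mathcal{H}}>0$. This can be read off directly from the unnormalized expression for $\dot{\mathcal{H}}$: with the $P$-term zero, $\dot{\mathcal{H}}=0$ forces the $S$-term to be zero. Because $\hat{m}_{ij}>0$ and $\myvec{\Sigma}^{-1}\succ0$, this gives $\myvec{s}_j=\myvec{c}^s_j$. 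Invariance is immediate, since both controls vanish on this set.

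\textbf{Main obstacle.} The quantity $\gamma_c$ is state dependent and becomes $0/0$ exactly at the target set, so the condition $\gamma>\gamma_c$ must be interpreted as holding along the trajectory, or uniformly on a sublevel set. The delicate part is making LaSalle rigorous with a sign-indefinite derivative. I would first restrict attention to the forward-invariant region where $\gamma>\gamma_c$ holds. Within that region, $\dot{\mathcal{H}}\le0$ follows directly from the unnormalized expression in Step 1, which avoids the division. I would also need to verify that the $0/0$ points, where both residuals vanish, are exactly the set in~\eqref{eq:equilibrium}.
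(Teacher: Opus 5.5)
Your proposal is the paper's proof: $V=\mathcal{H}$, $\dot{\mathcal{H}}$ computed from \eqref{eq:gradH} and \eqref{eq:gradHs3}, then LaSalle; and your factor $2$ in the agent term is the correct computation (the paper's displayed $\dot{\mathcal{H}}$ omits it), which only makes $\gamma>\gamma_c$ more than sufficient. The paper does not resolve the state dependence of $\gamma_c$ that you flag either, so do not try to prove $\{\gamma>\gamma_c\}$ forward invariant (it need not be: in \eqref{eq:2d} with $\mu_4>\mu_2$ and $\gamma$ just above $\mu_4/\mu_2-1$, orbits spiral into the focus and cross the nullcline $p=c^p(s)$, where $\dot{\mathcal{H}}>0$ off the fixed point); instead take the hypothesis along the trajectory and run LaSalle on the closure of the forward orbit, where continuity gives only $k^s\sum_{j}\sum_{i}\hat{m}_{ij}\|\myvec{c}^s_j-\myvec{s}_j\|^2_{\myvec{\Sigma}^{-1}}\le k^p\sum_{i}m_i\|\myvec{c}^p_i-\myvec{p}_i\|^2$, and your factor $2$ then still yields $\dot{\mathcal{H}}\le -k^p\sum_{i}m_i\|\myvec{c}^p_i-\myvec{p}_i\|^2$ there, so your Step~3 identification of the limit set goes through.
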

\begin{proof}
Consider the candidate Lyapunov function $V = \mathcal{H}$. Since both the performance function~\eqref{eq:f} of the ordinary agents and the collective risk density~\eqref{eq:phibar} of the adversaries in $\mathcal{H}$ are positive definite, $V$ is positive definite.

The time derivative of the coverage metric $\mathcal{H}$ is given by
\begin{align}\label{eq:Hdots1}
\dot{H}=\sum_{i=1}^{M} \frac{\partial \mathcal{H}}{\partial \myvec{p}_i} \cdot \dot{\myvec{p}}_i+\sum_{j=1}^{N} \frac{\partial \mathcal{H}}{\partial \myvec{s}_j} \cdot \dot{\myvec{s}}_j
\end{align}
Substituting the feedback laws \eqref{eq:up} and \eqref{eq:us} into $\dot{\mathcal{H}}$ yields
\begin{align}
\dot{\mathcal{H}} =-k^p\sum_{i=1}^{M}m_i\|\myvec{c}^{p}_i-\myvec{p}_i\|^2
+
k^s\sum_{j=1}^{N}\sum_{i=1}^{M}\hat{m}_{ij}\left\|\myvec{c}^{s}_{j}-\myvec{s}_j\right\|^2_{\myvec{\Sigma}^{-1}}
\end{align}
When $\gamma \geq \gamma_c$, it follows that $\dot{\mathcal{H}} \leq 0$. Therefore, by LaSalle's invariance principle~\cite{LaSalle}, the system trajectories converge to the largest invariant set where $\dot{\mathcal{H}} = 0$, corresponding to the equilibrium conditions in~\eqref{eq:equilibrium}.
\end{proof}

Here, one can show that the critical ratio $\gamma_c$ is finite. Since the Voronoi cells in~\eqref{eq:Voronoi} are convex and the density functions $\phi$ and $\hat{\phi}$ are non-negative in each cell, the corresponding centroids $\myvec{c}^p_i$ and $\hat{\myvec{c}}_{ij}$ lie in the $i$-th Voronoi cell. Moreover, the centroid $\myvec{c}^s_j$ associated with $\myvec{s}_j$ lies in the bounded workspace $\Omega$, because it is a convex combination of $\{\hat{\myvec{c}}_{ij}\}_{i=1}^M$ as defined in~\eqref{eq:csj}. Hence all Euclidean distance terms in~\eqref{eq:gammac} are uniformly bounded by the diameter of $\Omega$, and therefore $\gamma_c$ is finite. In particular, on any feasible set where the denominator in~\eqref{eq:gammac} is bounded away from zero, there exists a finite worst-case value for the gain ratio $\gamma$ ensuring the convergence to a fixed configuration.

\begin{remark}
The control gain ratio $\gamma$ compares how quickly the regular agents can react to deviations from the desired configuration relative to how aggressively the adversary can disrupt the system, and thus naturally reflects the relative physical capabilities of the agents. In particular, large values of $\gamma$ are expected in scenarios where regular agents possess higher actuation authority, faster sensing and control update rates, or more efficient information exchange than the adversary. Conversely, small values of $\gamma$ correspond to adversarially dominated regimes, such as those involving delayed feedback, limited control authority of the regular agents, or highly agile adversarial dynamics.
\end{remark}

It is worth noting that the equilibrium described in~\eqref{eq:equilibrium} for the gradient descent–ascent dynamics does not necessarily correspond to a Nash equilibrium of the zero-sum game~\cite{pmlr-v119-jin20e,holdingConvergenceSaddlePoints2014}. In particular, the second-order condition on the minimizer side, namely $\nabla^2_{pp}\mathcal{H}\succeq 0$, may fail to hold at the centroidal Voronoi configuration~\eqref{eq:CVT}, since the positive definiteness of $\nabla^2_{pp}\mathcal{H}$ can be guaranteed only in specific cases depending on the choice of risk density $\bar{\phi}$~\cite{cortesCoverageControlMobile2004,duCentroidalVoronoiTessellations1999,davydovSparsityStructureOptimality2020}. 

In addition, for strictly convex-concave functions, gradient methods converge to a saddle point~\cite{BealeE.M.L.1959SiLa}, whereas such a guarantee is no longer available when this structure is absent. Since Lloyd's algorithm~\eqref{eq:up} yields only a local minimum of the coverage metric~\cite{cortesCoverageControlMobile2004}, even when the gradient descent-ascent dynamics converges to a fixed configuration, the limit should be interpreted, at most, as a local Nash equilibrium, provided that the corresponding local second-order conditions are satisfied. Moreover, when the objective function is only non-strictly convex-concave, oscillatory solutions of the closed-loop system may arise~\cite{holdingConvergenceSaddlePoints2014}.

In the following section, we analyze a low-dimensional system to establish the conditions under which a Nash equilibrium exists. In this reduced setting, we further show that the resulting Nash equilibrium is global within the low-dimensional system. Moreover, we show that when the ratio $\gamma=k^p/k^s$ crosses the critical value $\gamma_c$ in \eqref{eq:gammac}, the system undergoes a bifurcation, leading to a periodic trajectory.

\section{Nash Equilibria and Bifurcation Analysis}

In order to establish closed form conditions for the existence of Nash equilibria for this class of systems, we analyze the two-dimensional case with one ordinary agent and one adversary, and establish the existence of a bifurcation that separates two classes of solutions. In this low-dimensional simplified scenario the agent and the adversary move along an one-dimensional domain $\Omega=[a,b]$.

From the definition of Voronoi partition~\eqref{eq:Voronoi}, the Voronoi cell for a single agent trivially consists of the whole domain, namely $\mathcal{V} = \Omega$. The corresponding differential game is thus formulated as
\begin{align}
\max_{s\in[a,b]}\min_{p\in[a,b]}\mathcal{H}(p,s)=\max_{s\in[a,b]}\min_{p\in[a,b]}\int_a^b|x-p|^2\phi(x;s)\ dx \label{eq:H2d}
\end{align}
where $p$ and $s$ denote the positions of the agent and adversary, respectively, along the one-dimensional domain $\Omega$, $|x-p|^2$ is the sensing performance function of $p$, and $\phi(x;s)$ is the risk density introduced by the adversary, modeled as a one-dimensional Gaussian density function centered at $s$ with variance $\sigma^2$.

The gradient of $\mathcal{H}$ is given by
\begin{align}
\nabla\mathcal{H}=
    \begin{bmatrix}
        \frac{\partial \mathcal{H}}{\partial p} \\
        \frac{\partial \mathcal{H}}{\partial s}
    \end{bmatrix}
    =
    \begin{bmatrix}
        -2\int_a^b(x-p)\phi(x;s)dx\\
        \frac{1}{\sigma^2}\int_a^b(x-p)^2(x-s)\phi(x;s)dx
    \end{bmatrix}
\label{eq:gradH2d1}
\end{align}
In analogy to the higher-dimensional case, the Voronoi mass and centroid are calculated as $m(s) = \int_a^b \phi(x;s) dx$ and $c^p(s) = \int_a^b x\phi(x;s) dx/m(s)$, respectively. Define the weighted density $\hat\phi=(x-p)^2\phi$, then the mass and centroid associated with $s$ are calculated by $\hat{m}(p,s)=\int_a^b \hat{\phi}\, dx$, $c^s(p, s) = \int_a^b x \hat{\phi}\, dx/\hat{m}$. $\nabla\mathcal{H}$ is rewritten as
\begin{align}
    \nabla\mathcal{H}=
    \begin{bmatrix}
        -2m(s) \bigl(c^p(s) - p\bigr) \\
        \frac{\hat{m}(p,s)}{\sigma^2} \bigl(c^s(p, s) - s\bigr)
    \end{bmatrix}
\end{align}

The low dimensional system thus evolves along the trajectories generated by the gradient-based controllers
\begin{align}
    \begin{bmatrix}
    u^{p}\\
    u^{s}
    \end{bmatrix}
    =
    \begin{bmatrix}
    \dot{p}\\
    \dot{s}
    \end{bmatrix}
    =
    \begin{bmatrix}
    k^p(c^p(s)-p)\\
    k^s(c^s(p,s)-s)\\
    \end{bmatrix}
    \label{eq:2d}
\end{align}
with control gains $k_p>0$, $k_s>0$.

In the following, we establish the two families of solutions controlled by the ratio $\gamma = k_p/k_s$.

\subsection{Saddle Point and local Nash Equilibrium}

We now show that the fixed point of \eqref{eq:gradH2d1} is a saddle point and further prove that it constitutes a Nash equilibrium of the low dimensional game~\eqref{eq:H2d}.

\begin{lemma}
When $\gamma>\gamma_c$, the lower dimensional system~\eqref{eq:H2d} admits a fixed point at 
\begin{align}\label{eq:2d fixed point}
    (c^p, c^s)^\top = \left( \frac{a + b}{2}, \frac{a + b}{2} \right)^\top
\end{align}
\end{lemma}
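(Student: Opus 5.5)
The plan is to show directly that the midpoint $\bar{x}=(a+b)/2$ satisfies both fixed-point conditions of the reduced dynamics~\eqref{eq:2d}, namely $p=c^p(s)$ and $s=c^s(p,s)$. The argument rests only on the reflection symmetry of the interval $[a,b]$ about $\bar{x}$ and on the symmetry of the Gaussian density $\phi(x;s)$ about its mean. The hypothesis $\gamma>\gamma_c$ enters only through the Proposition. By LaSalle's principle it guarantees that trajectories of~\eqref{eq:2d} converge to the invariant set~\eqref{eq:equilibrium}, which in one dimension reads $p=c^p(s)$, $s=c^s(p,s)$. The lemma then amounts to exhibiting an explicit point of this set.

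First I will fix $s=\bar{x}$ and change variables $y=x-\bar{x}$, so that the domain becomes $[-L,L]$ with $L=(b-a)/2$. Then $\phi(x;\bar{x})$ becomes an even function of $y$. The numerator of $c^p(\bar{x})-\bar{x}$ is
\begin{align*}
\int_{-L}^{L} y\,\phi\,dy,
\end{align*}
which vanishes because the integrand is odd. Since $m(\bar{x})>0$, this gives $c^p(\bar{x})=\bar{x}$, so the first equation holds with $p=\bar{x}$.

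Next I will set $p=s=\bar{x}$ in the weighted density, which gives $\hat{\phi}=(x-\bar{x})^2\phi(x;\bar{x})=y^2\phi$. This is again even in $y$. The numerator of $c^s(\bar{x},\bar{x})-\bar{x}$ is
\begin{align*}
\int_{-L}^{L} y^3\phi\,dy,
\end{align*}
which vanishes since the integrand is odd. Also $\hat{m}>0$, because the integrand $y^2\phi$ is nonnegative and not identically zero. Hence $c^s(\bar{x},\bar{x})=\bar{x}$, and both components of~\eqref{eq:2d} vanish at $(\bar{x},\bar{x})$. Equivalently, $\nabla\mathcal{H}=0$ there, by the rewritten form of~\eqref{eq:gradH2d1}.

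The computations themselves are routine. The main subtlety is interpretive: the phrase ``when $\gamma>\gamma_c$'' must be read correctly. At the fixed point itself both numerator and denominator of~\eqref{eq:gammac} vanish, so $\gamma_c$ has to be understood along trajectories away from the fixed point. The gain condition is what makes this fixed point the attracting limit, via the Proposition, rather than what makes it exist. I will state this explicitly.

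If uniqueness is also wanted, I would try the following. Show that $s\mapsto c^p(s)$ is strictly increasing with values in $(a,b)$, and reduce the fixed-point problem to the scalar equation $g(s)=c^s(c^p(s),s)-s=0$. Then argue from the antisymmetry $g(2\bar{x}-s)=-g(s)$, together with a sign analysis, that $\bar{x}$ is the only root. I expect this monotonicity and sign analysis to be the hard part, and it is not needed for the existence claim.
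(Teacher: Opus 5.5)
Your proposal is correct and follows essentially the same route as the paper. Both arguments evaluate at $p=s=\bar{x}:=(a+b)/2$ and use the oddness of $(x-\bar{x})\phi$ and $(x-\bar{x})^3\phi$ about the midpoint, so both components of $\nabla\mathcal{H}$ vanish (equivalently $c^p-p=0$ and $c^s-s=0$). You are somewhat more careful than the paper: you check that $m,\hat{m}>0$, and you point out that $\gamma>\gamma_c$ only governs convergence, not the existence of the fixed point.
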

\begin{proof}
Under the hypothesis that $\gamma>\gamma_c$, the agents converge to the largest invariant set~\eqref{eq:equilibrium}.

Substituting \eqref{eq:2d fixed point} into \eqref{eq:gradH2d1} yields
\begin{align}
\nabla\mathcal{H}
    =\begin{bmatrix}
        -2\int_a^b \ (x-\frac{a+b}{2})\phi(\frac{a+b}{2}) \ dx \\
        \frac{1}{\sigma^2}\int_a^b \ (x-\frac{a+b}{2})^3\phi(\frac{a+b}{2}) \ dx
    \end{bmatrix}
\label{eq:gradH2d2}
\end{align}
Since the integrand is odd with respect to the center of the interval $[a,b]$, , both integrals vanish. Therefore, $\nabla\mathcal{H}=0$ at~\eqref{eq:2d fixed point}, implying that the candidate point~\eqref{eq:2d fixed point} is a fixed point of~\eqref{eq:H2d}.
\end{proof}
\begin{proposition}
The fixed point~\eqref{eq:2d fixed point} is a saddle point of the coverage function $\mathcal{H}$.
\end{proposition}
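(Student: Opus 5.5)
The plan is to compute the Hessian of $\mathcal{H}(p,s)=\int_a^b (x-p)^2\phi(x;s)\,dx$ at $p=s=\mu:=(a+b)/2$. We then show that it has one positive and one negative eigenvalue, i.e. that $\det\nabla^2\mathcal{H}<0$. Throughout, write $y=x-\mu$, $h=(b-a)/2$, and $g(y)=\exp(-y^2/2\sigma^2)/\sqrt{2\pi}\sigma$. At the fixed point $\phi(x;\mu)=g(y)$, which is even on the symmetric interval $[-h,h]$. All odd moments of $g$ on $[-h,h]$ therefore vanish. This is the same symmetry used in the proof of the preceding Lemma.

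First I would differentiate the gradient \eqref{eq:gradH2d1}, using $\partial_s\phi=\sigma^{-2}(x-s)\phi$ and $\partial_s^2\phi=\bigl(\sigma^{-4}(x-s)^2-\sigma^{-2}\bigr)\phi$. This gives
\begin{align*}
\mathcal{H}_{pp}&=2\int_a^b\phi\,dx,\qquad
\mathcal{H}_{ps}=-\frac{2}{\sigma^2}\int_a^b (x-p)(x-s)\phi\,dx,\\
\mathcal{H}_{ss}&=\int_a^b (x-p)^2\Bigl(\frac{(x-s)^2}{\sigma^4}-\frac{1}{\sigma^2}\Bigr)\phi\,dx .
\end{align*}
Evaluating at $p=s=\mu$ and setting $M_k=\int_{-h}^{h}y^k g(y)\,dy$ yields
\begin{align*}
\mathcal{H}_{pp}=2M_0>0,\qquad \mathcal{H}_{ps}=-\frac{2M_2}{\sigma^2},\qquad \mathcal{H}_{ss}=\frac{M_4}{\sigma^4}-\frac{M_2}{\sigma^2}.
\end{align*}
Minimality in $p$ is thus immediate, since $\mathcal{H}$ is a convex quadratic in $p$.

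The saddle property then reduces to the inequality
\begin{align*}
2M_0\Bigl(\frac{M_4}{\sigma^4}-\frac{M_2}{\sigma^2}\Bigr)-\frac{4M_2^2}{\sigma^4}<0,
\quad\text{i.e.}\quad M_0M_4-\sigma^2M_0M_2<2M_2^2 .
\end{align*}
A cleaner route, which is also what the Nash-equilibrium discussion needs, is to prove the stronger statement $\mathcal{H}_{ss}<0$. This makes $(\mu,\mu)$ a strict local maximum in $s$ and a minimum in $p$, hence a saddle, and the determinant inequality then follows automatically because $\mathcal{H}_{pp}>0$.

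To prove $\mathcal{H}_{ss}<0$, i.e. $M_4<\sigma^2M_2$, I would integrate by parts using $g'(y)=-y g/\sigma^2$. This gives
\begin{align*}
M_4=\int_{-h}^{h}y^3\cdot y g\,dy=-2\sigma^2h^3g(h)+3\sigma^2M_2 ,
\end{align*}
and similarly $M_2=-2\sigma^2 h g(h)+\sigma^2M_0$. The sign of $\sigma^2M_2-M_4=2\sigma^2h^3g(h)-2\sigma^2M_2$ therefore depends on $h$ relative to $\sigma$. On a large domain ($h\gg\sigma$) it is negative, since $M_2\to\sigma^2$ while $h^3g(h)\to0$. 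I expect this to be the main obstacle: on a short interval compared to $\sigma$ one has $M_4\approx h^2M_2\cdot\tfrac{3}{5}<\sigma^2M_2$, so $\mathcal{H}_{ss}<0$ there.

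Since the sign of $\mathcal{H}_{ss}$ alone is regime dependent, I would fall back on the determinant criterion, which only requires the weaker inequality displayed above. Substituting the integration-by-parts identities turns it into an inequality in $h/\sigma$ and the truncated Gaussian mass. I would verify it by expanding in the two limits $h/\sigma\to0$ and $h/\sigma\to\infty$, and then argue monotonicity in between, or otherwise state it as a condition on $h/\sigma$.
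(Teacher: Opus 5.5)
\textbf{There is a genuine gap: the step that carries the whole proposition is never proved.}

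Several parts of your proposal are correct:
\begin{itemize}
\item the Hessian entries;
\item the reduction of the saddle property to $\det\nabla^2\mathcal{H}<0$;
\item the two integration-by-parts identities;
\item your observation that $\mathcal{H}_{ss}<0$ fails once $h/\sigma$ is large, which matches the paper's threshold $\epsilon_c\approx 1.37$.
\end{itemize}
What is missing is a proof that the determinant is negative for \emph{every} $\epsilon=h/\sigma>0$.

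Checking the limits $\epsilon\to0$ and $\epsilon\to\infty$ and then ``arguing monotonicity in between'' does not establish this.
\begin{itemize}
\item The determinant is not monotone in $\epsilon$: it tends to $0$ at both ends.
\item For large $\epsilon$ it is exponentially small, of order $\epsilon^3 e^{-\epsilon^2/2}$. An expansion in powers of $\sigma/h$ therefore cannot even detect its sign.
\item In the regime $\epsilon>\epsilon_c$, both $\mathcal{H}_{pp}$ and $\mathcal{H}_{ss}$ are positive, so the sign rests entirely on the coupling term $\mathcal{H}_{ps}^2$.
\end{itemize}
Your fallback of stating the inequality as a condition on $h/\sigma$ would prove a weaker result than the proposition, which is unconditional.

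\textbf{How to close the gap.} The missing step is short once the substitution is actually carried through. Insert $M_2=\sigma^2M_0-2\sigma^2hg(h)$ and $M_4=3\sigma^2M_2-2\sigma^2h^3g(h)$ into the determinant. Everything cancels except
\[
\det\nabla^2\mathcal{H}=\frac{4h\,g(h)}{\sigma^2}\bigl(2M_2-h^2M_0\bigr).
\]
So the saddle property is equivalent to $M_2<\tfrac{h^2}{2}M_0$.

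Because $g$ is strictly decreasing in $|y|$,
\[
\int_0^h\bigl(y^2-\tfrac{h^2}{3}\bigr)g(y)\,dy<g(h/\sqrt{3})\int_0^h\bigl(y^2-\tfrac{h^2}{3}\bigr)dy=0 .
\]
That is, $M_2<\tfrac{h^2}{3}M_0$, which gives $\det\nabla^2\mathcal{H}<0$ for all $h,\sigma>0$.

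Equivalently, in standardized variables set $N(\epsilon)=\epsilon^2\mu_0-2\mu_2=\mu_0(\epsilon^2-2)+4\epsilon\psi(\epsilon)$. It satisfies $N(0)=0$ and $N'(\epsilon)=2\epsilon\bigl(\mu_0-\epsilon\psi(\epsilon)\bigr)>0$. This is exactly the quantity the paper proves positive in the Hopf proposition, since $\det\boldsymbol{J}=-k_pk_s\det\nabla^2\mathcal{H}/(2\mu_0\mu_2)$.

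\textbf{Do not model the fix on the paper's own proof of this proposition.} The bounds $\mu_4\le\epsilon^2\mu_2$ and $\mu_2\le\epsilon^2\mu_0$ used there only yield $\det\nabla^2\mathcal{H}\le 2\mu_0\mu_2(\epsilon^2-1)-4\mu_2^2$, which is positive for large $\epsilon$. The stated bound $\det\nabla^2\mathcal{H}\le-\mu_0(1+\epsilon^2)$ is false: as $\epsilon\to\infty$, $\mu_0,\mu_2,\mu_4\to1,1,3$, so $\det\nabla^2\mathcal{H}\to0$ while the bound tends to $-\infty$.
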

\begin{proof}
We show that the fixed point~\eqref{eq:2d fixed point} is a saddle point by proving that the Hessian matrix evaluated at this point is indefinite, i.e., it has eigenvalues of opposite signs.

At the fixed point~\eqref{eq:2d fixed point}, the Hessian of $\mathcal{H}$ is given by
\begin{align}
    \nabla^2\mathcal{H}=
    \begin{bmatrix}
        \mathcal{H}_{pp} & \mathcal{H}_{ps} \\
        \mathcal{H}_{sp} & \mathcal{H}_{ss}
    \end{bmatrix}
    =
    \begin{bmatrix}
        -2m(\frac{\partial c^p}{\partial p}-1) &  -2m\frac{\partial c^p}{\partial s}\\
        \frac{\hat{m}}{\sigma^2}\frac{\partial c^s}{\partial p} & \frac{\hat{m}}{\sigma^2} \bigl(\frac{\partial c^s}{\partial s}-1\bigr)
    \end{bmatrix}
\label{eq:HessianH}
\end{align}
where
\begin{subequations}\label{eq:gradients1}
\begin{align}
    \frac{\partial c^p}{\partial p } &=0 \\
    \frac{\partial c^s}{\partial s} &= \frac{1}{\hat{m}^2\sigma^2}\left(\int_a^b x^2\hat{\phi}\ dx \int_a^b \hat{\phi}\ dx-\left( \int_a^b x\hat{\phi}\ dx\right)^2\right) \label{eq:ds/ds1}\\
    \frac{\partial c^p}{\partial s} &= \frac{1}{m^2\sigma^2}\left(\int_a^b x^2\phi\ dx \int_a^b \phi\ dx-\left(\int_a^b x\phi\ dx\right)^2\right) \label{eq:dc/ds1}\\
    \frac{\partial c^s}{\partial p} &= \frac{2}{\hat{m}^2}\left(\int_a^b (x-p)^3\phi\ dx \int_a^b (x-p)\phi\ dx - \hat{m}^2\right) \label{eq:ds/dp1}
\end{align}
\end{subequations}
Here, $\partial c^p/\partial p = 0$ due to the single ordinary agent setup. In~\eqref{eq:ds/dp1}, all odd-order terms vanish because the integrands are odd functions and the integration domain is symmetric about the fixed point~\eqref{eq:2d fixed point}, which implies $\partial c^s/\partial p = -2$.

We convert the Gaussian density $\phi$ in \eqref{eq:gradients1} to the standard normal form $\psi(z)=\exp(-z^2/2)/\sqrt{2\pi}$ by performing a location–scale change of variables $z=(x-p)/\sigma$. This substitution centers at  $p$ and scales by $\sigma$, yielding integrals over $z\in[-\epsilon,\epsilon]$ with $\epsilon=|a-b|/(2\sigma)$, which is the ratio between half the length of the workspace $[a,b]$ and the standard deviation $\sigma$ of the risk density $\phi(s)$. Define the $k$-th order truncated moment of the standardized Gaussian density function $\psi(z)$ as
\begin{align}
    \mu_k:=\int_{-\epsilon}^{\epsilon}\ z^k\psi(z)\ dz
    \label{eq:muk}
\end{align}
Then the terms in \eqref{eq:gradients1} are given by $m=\mu_0$, $\hat m=\sigma^2\mu_2$, $\partial c^p/\partial p=0$, $\partial c^p/\partial s=\mu_2/\mu_0$, $\partial c^s/\partial s=\mu_4/\mu_2$, and $\partial c^s/\partial p=-2$, and the Hessian at the equilibrium~\eqref{eq:HessianH} is written as
\begin{align}
    \nabla^2\mathcal{H}=
        \begin{bmatrix}
        \mathcal{H}_{pp} & \mathcal{H}_{ps} \\
        \mathcal{H}_{sp} & \mathcal{H}_{ss}
    \end{bmatrix}=
    \begin{bmatrix}
        2\mu_0 & -2\mu_2\\
        -2\mu_2 & \mu_4-\mu_2
    \end{bmatrix}
\label{eq:HessianH2}
\end{align}
The determinant is given by
\begin{align}
    \det \nabla^2 \mathcal{H} = 2\mu_0(\mu_4 - \mu_2) - 4{\mu_2}^2
    \label{eq:detH}
\end{align}
Since $z \in [-\epsilon,\epsilon]$, we have $z^4\leq z^2 \epsilon^2$ and 
\begin{subequations}    \label{eq:ieq}
\begin{align}
    \mu_4 &= \int_{-\epsilon}^{\epsilon}\ z^4\psi(z)\ dz \leq \epsilon^2\int_{-\epsilon}^{\epsilon}\ z^2\psi(z)\ dz = \epsilon^2\mu_2 \\
    \mu_2 &= \int_{-\epsilon}^{\epsilon}\ z^2\psi(z)\ dz \leq \int_{-\epsilon}^{\epsilon}\ \epsilon^2\psi(z)\ dz = \epsilon^2\mu_0
\end{align}
\end{subequations}
Substitute \eqref{eq:ieq} into \eqref{eq:detH} to obtain
\begin{align}
    \det \nabla^2\mathcal{H} \leq -\mu_0(1 + \epsilon^2) < 0
\end{align}
Since the Hessian matrix at~\eqref{eq:2d fixed point} is real and symmetric, a negative determinant implies that its eigenvalues are of opposite signs. Therefore, the fixed point~\eqref{eq:2d fixed point} is a saddle point of~\eqref{eq:H2d}.
\end{proof}

\begin{proposition}
The saddle point~\eqref{eq:2d fixed point} constitutes a local Nash equilibrium of the differential game~\eqref{eq:H2d} if 
\begin{align}
    \mu_4(\epsilon)<\mu_2(\epsilon)
\end{align}
where $\mu_k(\epsilon)$ is defined in~\eqref{eq:muk}. 
\end{proposition}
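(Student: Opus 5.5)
The plan is to use the standard characterization of a local Nash equilibrium for a two-player zero-sum game with a smooth payoff. A point $(p^*,s^*)$ is a local Nash equilibrium of \eqref{eq:H2d} if it satisfies two conditions: $p^*$ is a local minimizer of $\mathcal{H}(\cdot,s^*)$, and $s^*$ is a local maximizer of $\mathcal{H}(p^*,\cdot)$. Since the preceding Lemma already gives $\nabla\mathcal{H}=0$ at \eqref{eq:2d fixed point}, the first-order conditions hold for both players. It then suffices to verify the strict second-order conditions, namely $\mathcal{H}_{pp}>0$ and $\mathcal{H}_{ss}<0$ at the fixed point. Together with stationarity, these give a strict local minimum in $p$ and a strict local maximum in $s$. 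Note that this is weaker than requiring the full Hessian to be definite: the Hessian is indefinite by the previous Proposition, which is exactly the saddle structure one expects.

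The first step is to read off the diagonal entries from the standardized Hessian \eqref{eq:HessianH2}, namely $\mathcal{H}_{pp}=2\mu_0$ and $\mathcal{H}_{ss}=\mu_4-\mu_2$, both up to the positive scaling absorbed in the change of variables $z=(x-p)/\sigma$. As a sanity check I would rederive these quickly from \eqref{eq:HessianH}. For $\mathcal{H}_{pp}$, we have $\mathcal{H}_{pp}=-2m(\partial c^p/\partial p-1)=2m=2\mu_0$. For $\mathcal{H}_{ss}$, we have $\mathcal{H}_{ss}=(\hat m/\sigma^2)(\partial c^s/\partial s-1)$ with $\hat m=\sigma^2\mu_2$ and $\partial c^s/\partial s=\mu_4/\mu_2$. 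This gives $\mathcal{H}_{ss}=\mu_2(\mu_4/\mu_2-1)=\mu_4-\mu_2$. The check is worth doing because the variance formula \eqref{eq:ds/ds1} carries a $1/\sigma^2$ factor that, combined with $\hat m$, must reduce to the stated dimensionless form.

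The second step handles the minimizer side. Since $\psi>0$ on $[-\epsilon,\epsilon]$ and $\epsilon>0$, we have $\mu_0>0$, so $\mathcal{H}_{pp}>0$ unconditionally. In fact $\mathcal{H}(p,s^*)$ is a convex quadratic in $p$ (its second derivative is $2m(s)$ for all $p$), so the agent's best response at the midpoint is a global minimizer over $[a,b]$, and this side needs no hypothesis at all. The third step handles the maximizer side: the hypothesis $\mu_4(\epsilon)<\mu_2(\epsilon)$ is precisely $\mathcal{H}_{ss}<0$. With $\partial\mathcal{H}/\partial s=0$ at $s^*=(a+b)/2$ and $\mathcal{H}(p^*,\cdot)$ being $C^2$ (a Gaussian integrated over a compact interval), Taylor's theorem gives a neighbourhood of $s^*$ on which $\mathcal{H}(p^*,s)<\mathcal{H}(p^*,s^*)$ for $s\neq s^*$. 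Combining the two unilateral conditions yields the local Nash inequalities $\mathcal{H}(p^*,s)\le\mathcal{H}(p^*,s^*)\le\mathcal{H}(p,s^*)$ for $(p,s)$ near the fixed point.

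The main obstacle is mostly bookkeeping: I need to be sure that the entry $\mu_4-\mu_2$ in \eqref{eq:HessianH2} really is $\mathcal{H}_{ss}$ up to a \emph{positive} factor, so that its sign is preserved. The concern is the $\sigma$-scaling in the change of variables, and also the fact that the partial derivatives in \eqref{eq:gradients1} were computed with $p$ fixed at the midpoint, so that $x-p=\sigma z$. I would settle this by differentiating $\partial\mathcal{H}/\partial s$ in \eqref{eq:gradH2d1} directly at $p=s=(a+b)/2$. This gives $\mathcal{H}_{ss}=\frac{1}{\sigma^2}\int (x-p)^2\big[(x-s)^2/\sigma^2-1\big]\phi\,dx$, which after substitution equals $\sigma^{-1}$-type positive multiples of $\mu_4-\mu_2$. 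This confirms that the sign condition is exactly the stated one.
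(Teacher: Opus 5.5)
Your proposal is correct and follows essentially the same route as the paper: stationarity from the Lemma, together with the diagonal Hessian entries $\mathcal{H}_{pp}=2\mu_0>0$ and $\mathcal{H}_{ss}=\mu_4-\mu_2<0$ read off from \eqref{eq:HessianH2}. Two of your choices are small gains in rigor: you use the strict second-order conditions (the paper calls the non-strict ones sufficient, which they are not in general), and you note that $\mathcal{H}(\cdot,s^\ast)$ is a convex quadratic. Your worry about scaling is unnecessary, since the change of variables gives $\mathcal{H}_{ss}=\mu_4-\mu_2$ exactly, with no $\sigma$-dependent factor.
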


\begin{proof}
We adopt the sufficient condition of local Nash equilibrium by verifying the first- and second-order sufficient conditions~\cite{pmlr-v119-jin20e,ratliffCharacterizationComputationLocal2013}. In particular, consider a two-player zero-sum game $\mathcal{H}:\boldsymbol{P}\times \boldsymbol{S}\rightarrow\mathbb{R}$, where $\boldsymbol{P}$ is a minimizer and $\boldsymbol{S}$ is a maximizer. Any stationary point $\nabla \mathcal{H}(p^\ast,s^\ast) = 0$ satisfying the following sufficient condition is a local Nash equilibrium:
\begin{align}\label{eq:Nash condition}
    \mathcal{H}_{pp}(p^\ast,s^\ast) \succeq 0,
    \qquad
    \mathcal{H}_{ss}(p^\ast,s^\ast) \preceq 0.
\end{align}
That is, at $(p^\ast,s^\ast)$, $\mathcal{H}$ is locally minimized with respect to $p$ and locally maximized with respect to $s$.

At the saddle point~\eqref{eq:2d fixed point}, the first-order condition is satisfied as shown above. Moreover, from~\eqref{eq:HessianH2}, the second derivative with respect to the minimizer is \begin{align} \mathcal{H}_{pp}=2\mu_0(\epsilon)>0. \end{align} The second derivative with respect to the maximizer is
\begin{align}
\mathcal{H}_{ss} = \mu_4(\epsilon)-\mu_2(\epsilon) = \int_{-\epsilon}^{\epsilon}z^2(z^2-1)\psi(z)\,dz
\end{align}
Therefore, whenever $\mu_4(\epsilon)<\mu_2(\epsilon)$, we have $\mathcal{H}_{ss}<0$. Hence, the payoff function is locally convex with respect to the agent state $p$ and locally concave with respect to the adversary state $s$ in a neighborhood of $(c^p,c^s)$. The sufficient conditions for a local Nash equilibrium are thus satisfied.
\end{proof}

\begin{remark}
The condition $\mu_4(\epsilon)<\mu_2(\epsilon)$ depends on the dimensionless parameter $\epsilon=(b-a)/(2\sigma)$, which compares the half-length of the workspace with the standard deviation of the Gaussian risk density.
Since
\begin{align}
    \frac{\partial\mathcal{H}_{ss}(\epsilon)}{\partial\epsilon}=2\epsilon^2(\epsilon^2-1)\psi(\epsilon)
\end{align}
the $\mathcal{H}_{ss}(\epsilon)$ is strictly decreasing on $(0,1)$ and strictly increasing on $(1,\infty)$. Moreover, $\mathcal{H}_{ss}(0)=0$ and $\mathcal{H}_{ss}(\epsilon)\rightarrow 2$ as $\epsilon\rightarrow\infty$. Therefore, $\mathcal{H}_{ss}=0$ has a unique positive root $\epsilon_c>1$, and the condition $\mu_4(\epsilon)<\mu_2(\epsilon)$ is equivalent to $0<\epsilon<\epsilon_c$. For the case considered here, numerical evaluation gives $\epsilon_c\approx 1.37$. Thus, this condition corresponds to a sufficiently large Gaussian spread relative to the half-length of the workspace.
\end{remark}

\begin{remark}
The above proof verifies the sufficient conditions for a local Nash equilibrium. In the present low-dimensional setting, the centroidal configuration can be calculated explicitly, and the corresponding centroid is located at the center of the workspace~\eqref{eq:2d fixed point}. This yields a unique equilibrium configuration in the reduced game~\eqref{eq:H2d}. Hence, the local Nash equilibrium characterized above is, in fact, global.

By contrast, this conclusion does not extend directly to the original higher-dimensional system. Since Lloyd's algorithm yields only a local minimum of the coverage metric, the corresponding Nash equilibrium, when it exists, can in general be interpreted at most as a local one. Moreover, because the Hessian block $H_{pp}$ is not guaranteed to be positive semi-definite in the original system, one must verify numerically at the fixed configuration that the Hessian blocks satisfy $H_{pp}\succeq 0$ and $H_{ss}\preceq 0$ in order to conclude that the limit configuration corresponds to a local Nash equilibrium.
\end{remark}

\subsection{Bifurcation Analysis of the Dynamical System}

Having established the Nash equilibrium in \eqref{eq:equilibrium}, we now analyze the bifurcation behavior of the gradient ascent–descent system \eqref{eq:2d}. For $\mu_4 > \mu_2$ (equivalently, $\epsilon > \epsilon_c$), we show that the system undergoes a Hopf bifurcation when the gain ratio $\gamma$ crosses a critical value $\gamma_c$, giving rise to a limit cycle. In contrast, for $\mu_4 < \mu_2$, the system remains stable for all admissible choices of the gain ratio $\gamma > 0$ and converges to a fixed configuration, which, in this particular low-dimensional setting, coincides with the Nash equilibrium established above.

\begin{proposition}\label{proposition:imaginary roots}
Consider the coupled dynamical system~\eqref{eq:2d}. Let $\gamma = k_p/k_s$ be the ratio between the agent's gain $k_p$ and the adversary's gain $k_s$.
For $\mu_4 > \mu_2$, a Hopf bifurcation occurs, and a limit cycle emerges with $\gamma < \mu_4/\mu_2-1$ .
\end{proposition}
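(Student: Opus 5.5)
Linearize the closed-loop system~\eqref{eq:2d} at the fixed point~\eqref{eq:2d fixed point} and locate the gain ratio at which a conjugate pair of eigenvalues crosses the imaginary axis. By~\eqref{eq:2d}, the Jacobian is obtained from the Hessian entries in~\eqref{eq:gradients1}:
\begin{align*}
J=\begin{bmatrix}
k^p\bigl(\frac{\partial c^p}{\partial p}-1\bigr) & k^p\frac{\partial c^p}{\partial s}\\
k^s\frac{\partial c^s}{\partial p} & k^s\bigl(\frac{\partial c^s}{\partial s}-1\bigr)
\end{bmatrix}
=k^s\begin{bmatrix}
-\gamma & \gamma\mu_2/\mu_0\\
-2 & \mu_4/\mu_2-1
\end{bmatrix},
\end{align*}
using the standardized moments computed in the proof of the saddle-point proposition: $\partial c^p/\partial p=0$, $\partial c^p/\partial s=\mu_2/\mu_0$, $\partial c^s/\partial s=\mu_4/\mu_2$ and $\partial c^s/\partial p=-2$. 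The trace and determinant are
\begin{align*}
\operatorname{tr}J=k^s\Bigl(\frac{\mu_4}{\mu_2}-1-\gamma\Bigr),\qquad
\det J=(k^s)^2\gamma\Bigl(1-\frac{\mu_4}{\mu_2}+\frac{2\mu_2}{\mu_0}\Bigr).
\end{align*}

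The key step is to show that $\det J>0$ for every $\gamma>0$, so that purely imaginary eigenvalues occur exactly when $\operatorname{tr}J=0$. This is equivalent to $\mu_0\mu_2-\mu_0\mu_4+2\mu_2^2>0$, i.e.\ $\mu_0(\mu_4-\mu_2)<2\mu_2^2$, which is the strict negativity of $\det\nabla^2\mathcal{H}$ in~\eqref{eq:detH}. Multiplying that determinant by $-1/(2\mu_0\mu_2)$ gives exactly the bracket above, so the saddle-point proposition already supplies this inequality. I expect this to be the main obstacle. The bound $\det\nabla^2\mathcal{H}\le-\mu_0(1+\epsilon^2)$ obtained earlier from~\eqref{eq:ieq} looks suspicious: substituting upper bounds for $\mu_4$ and $\mu_2$ into $2\mu_0\mu_4-2\mu_0\mu_2-4\mu_2^2$ does not yield a valid upper bound, because $\mu_2$ enters with a negative sign. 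I would therefore re-prove $\mu_0\mu_4-\mu_0\mu_2-2\mu_2^2<0$ directly. First, integrating by parts on $[-\epsilon,\epsilon]$ with $\psi'=-z\psi$ gives $\mu_4=3\mu_2-2\epsilon^3\psi(\epsilon)$ and $\mu_2=\mu_0-2\epsilon\psi(\epsilon)$. Then the sign of the resulting expression in $\epsilon$ and $\psi(\epsilon)$ can be checked, falling back on the truncated-normal variance identity or a numerical check if needed.

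With $\det J>0$ in hand, the eigenvalues are $\lambda=\tfrac12\bigl(\operatorname{tr}J\pm\sqrt{\operatorname{tr}^2J-4\det J}\bigr)$. For $\gamma>\mu_4/\mu_2-1=:\gamma_c$ both eigenvalues have negative real part, so the fixed point is stable. At $\gamma=\gamma_c$ the pair is $\pm i\omega_0$ with $\omega_0=\sqrt{\det J}>0$, and this requires $\mu_4>\mu_2$ so that $\gamma_c>0$. For $\gamma<\gamma_c$ the real part is positive, so the fixed point is unstable. Transversality follows because $\frac{d}{d\gamma}\operatorname{Re}\lambda=-k^s/2\ne0$ at $\gamma_c$.

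Finally, invoke the Hopf bifurcation theorem in the planar setting. To decide that the bifurcation produces a limit cycle for $\gamma<\gamma_c$, I would compute the first Lyapunov coefficient from the second- and third-order Taylor terms of $c^p,c^s$ at the symmetric point. The symmetry $x\mapsto a+b-x$ kills the quadratic terms, which simplifies the coefficient to cubic terms only. If that computation is intractable, I would use a Poincar\'e--Bendixson argument instead. The fixed point is the unique equilibrium and is a repeller for $\gamma<\gamma_c$, and the vector field points inward on the boundary of the compact invariant square $[a,b]^2$, since $c^p,c^s\in(a,b)$. This yields a periodic orbit for all $\gamma<\gamma_c$, consistent with the statement.
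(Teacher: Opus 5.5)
Your route is the paper's: the same Jacobian, $\operatorname{tr}J=0$ at $\gamma=\mu_4/\mu_2-1$, and $\det J>0$ via integration by parts. Your equivalence of $\det J>0$ with $\det\nabla^2\mathcal{H}<0$ is right. So is your suspicion that the bound $\det\nabla^2\mathcal{H}\le-\mu_0(1+\epsilon^2)$ is invalid: the exact value tends to $0$ as $\epsilon\to\infty$, while the bound tends to $-\infty$. The paper's Hopf proof does not rely on that bound.

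The gap is that you stop at the step you yourself call the main obstacle. Saying the sign ``can be checked \ldots or a numerical check'' does not establish it for all $\epsilon>0$. The missing step is short. Your two identities give
\begin{align*}
\mu_0\mu_4-\mu_0\mu_2-2\mu_2^2=2\mu_2(\mu_0-\mu_2)-2\epsilon^3\psi(\epsilon)\mu_0=-2\epsilon\psi(\epsilon)\bigl(\epsilon^2\mu_0-2\mu_2\bigr),
\end{align*}
so you need $g(\epsilon):=\epsilon^2\mu_0-2\mu_2=\mu_0(\epsilon^2-2)+4\epsilon\psi(\epsilon)>0$. This is exactly the numerator in \eqref{eq:detJrewrite}. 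The paper obtains it from $g(0)=0$ and $g'(\epsilon)=2\epsilon\bigl(\mu_0-\epsilon\psi(\epsilon)\bigr)>0$; its displayed derivative has typos, but the sign is correct. Alternatively, apply Chebyshev's integral inequality on $[0,\epsilon]$, with $z^2$ increasing and $\psi$ decreasing. This gives $\mu_2\le\epsilon^2\mu_0/3$, hence $g\ge\epsilon^2\mu_0/3>0$. The cruder bound $\mu_2\le\epsilon^2\mu_0$ from \eqref{eq:ieq} would not suffice; you need the factor $1/2$.

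On the limit cycle you aim higher than the paper, which infers the Hopf bifurcation and the cycle directly from the eigenvalue crossing. Your transversality computation and the odd-symmetry remark are correct, but neither fallback is carried out. Without the first Lyapunov coefficient, and specifically $\ell_1<0$, you cannot conclude that a cycle is born from the equilibrium on the side $\gamma<\gamma_c$.

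The Poincar\'e--Bendixson alternative needs the center to be the only equilibrium in $[a,b]^2$. You assert this without proof, and the paper only claims it in a remark. Otherwise the $\omega$-limit set of an orbit leaving the repeller could contain other equilibria. If you do establish uniqueness, Poincar\'e--Bendixson gives a periodic orbit for every $\gamma\in(0,\gamma_c)$, a global statement the paper does not have. It still does not, by itself, show that this orbit bifurcates from the fixed point at $\gamma_c$.
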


\begin{proof}
Linearization of~\eqref{eq:2d} around the saddle point $(p,s)^\top=(c^p,c^s)^\top$ gives the Jacobian matrix
\begin{align}\label{eq:Jacobian1}
\myvec{J}=
    \begin{bmatrix}
    \frac{\partial \dot{p}}{\partial p} & \frac{\partial \dot{p}}{\partial s}\\
    \frac{\partial \dot{s}}{\partial p} & \frac{\partial \dot{s}}{\partial s}
    \end{bmatrix}
    =
    \begin{bmatrix}
    k^p(\frac{\partial c^p}{\partial p}-1) & 
    k^p\frac{\partial c^p}{\partial s}\\
    k^s\frac{\partial c^s}{\partial p} & 
    k^s(\frac{\partial c^s}{\partial s}-1)
    \end{bmatrix}
\end{align}
Substituting the expressions for the partial derivatives of $c^p$ and $c^s$ into \eqref{eq:Jacobian1} yields
\begin{align}\label{eq:Jacobian2}
\myvec{J}=
    \begin{bmatrix}
    -k^p & 
    k^p\frac{\mu_2}{\mu_0}\\
    -2k^s & 
    k^s(\frac{\mu_4}{\mu_2}-1)
    \end{bmatrix}
\end{align}

To establish the occurrence of a Hopf bifurcation, we analyze the eigenvalues of the Jacobian matrix~\eqref{eq:Jacobian2} and verify that a pair of complex conjugate eigenvalues cross the imaginary axis as $\gamma=\gamma_c$~\cite{alma991044077619705161}.

The eigenvalues $\lambda_{1,2}(\myvec{J})$ can be calculated from the standard expression
\begin{align}\label{eq:eigenvalues}
\lambda_{1,2}=
\frac{1}{2}\left(\operatorname{tr}\boldsymbol{J} \pm \sqrt{(\operatorname{tr}\boldsymbol{J})^2-4\det \boldsymbol{J}}\right)
\end{align}
we obtain the following expressions
\begin{subequations}\label{eq:trdet}
    \begin{align}
    &\operatorname{tr}\boldsymbol{J}=k_s\left(\frac{\mu_4}{\mu_2}-1 - \gamma\right)
    \label{eq:trJ}
    \\
    &\operatorname{det}\boldsymbol{J} =\gamma k_s^2\left(1+2\frac{\mu_2}{\mu_0}-\frac{\mu_4}{\mu_2}\right)
    \label{eq:detJ}
    \end{align}
\end{subequations}
where $\gamma = k_p/k_s > 0$ as the control gains are positive. When $\mu_4 > \mu_2$, we have $\mu_4/\mu_2 - 1 > 0$; thus, the sign of $\operatorname{tr}\boldsymbol{J}$ may change as $\gamma$ varies. Substituting $\gamma = \gamma_c = \mu_4/\mu_2 - 1$ into \eqref{eq:trJ}, we have $\operatorname{tr}\boldsymbol{J} = 0$.

We proceed to determine the sign of $\operatorname{det}\boldsymbol{J}$. By using the property $\psi'(z)=-z\psi(z)$ of the standard Gaussian density $\psi(z)$, integration by parts of $\mu_2$ and $\mu_4$ yields
\begin{subequations}\label{eq:mu}
\begin{align}
 &\mu_2=\int_{-\epsilon}^{\epsilon}\ z^2\psi(z)\ dz
 =-2\epsilon\psi(\epsilon)+\mu_0 \label{eq:mu2} \\
 &\mu_4
 =\int_{-\epsilon}^{\epsilon}\ z^4\psi(z)\ dz
 =-2\epsilon(\epsilon^2+3)\psi(\epsilon)+3\mu_0 \label{eq:mu4}
\end{align}
\end{subequations}
Substituting \eqref{eq:mu} into \eqref{eq:detJ}, we obtain
\begin{align}\label{eq:detJrewrite}
    \operatorname{det}\boldsymbol{J}=2 \gamma k_s^2\left (\frac{\epsilon\psi(\epsilon)\bigl(\mu_0(\epsilon^2-2) + 4\epsilon\psi(\epsilon)\bigr)}{\mu_0\bigl(\mu_0-2\epsilon\psi(\epsilon)\bigr)}\right)
\end{align}
For this ratio, we first determine the sign of the denominator. Since the standard normal density $\psi(z)$ is strictly increasing for $z < 0$ and strictly decreasing for $z > 0$, the expression %
\begin{align*}
    2\epsilon\psi(\epsilon)=\int_{-\epsilon}^{\epsilon}\ \psi(\epsilon)\ dz
\end{align*}
attains its minimum at the boundaries $z = \pm \epsilon$. Therefore, for all $z \in (-\epsilon, \epsilon)$, we have $\psi(z) > \psi(\epsilon)$, which implies
\begin{align*}
    \mu_0=\int_{-\epsilon}^{\epsilon}\ \psi(z)\ dz>\int_{-\epsilon}^{\epsilon}\ \psi(\epsilon)\ dz=2\epsilon\psi(\epsilon)
\end{align*}
Thus, we conclude that $\mu_0 - 2\epsilon\psi(\epsilon) > 0$, and consequently, the denominator of~\eqref{eq:detJrewrite} is positive for all $\epsilon > 0$.

To determine the sign of the numerator, we start from
\begin{align*}
    \frac{\partial \mu_0}{\partial\epsilon}=2\psi(\epsilon)
\end{align*}
Differentiating the numerator of~\eqref{eq:detJrewrite} with respect to $\epsilon$ yields
\begin{align*}
    \frac{\partial}{\partial \epsilon}\bigl(\mu_0(\epsilon^2-2)\psi(\epsilon) +4\epsilon\psi(\epsilon)\bigr)=\epsilon\bigl(\mu_0-2\epsilon\psi(\epsilon)\bigr)
\end{align*}
which, as established above, is strictly positive for $\epsilon > 0$, implying that the numerator is monotonically increasing with respect to $\epsilon$. Evaluating the numerator at $\epsilon = 0$ gives zero. Therefore, for all $\epsilon > 0$, the numerator is strictly positive, and since the denominator is also positive, it follows that $\operatorname{det}\boldsymbol{J} > 0$.

Substituting $\gamma = \mu_4/\mu_0 - 1$ into~\eqref{eq:eigenvalues}, we obtain
\begin{align*}
    \lambda_{1,2}=\frac{1}{2}\sqrt{-4\operatorname{det}\boldsymbol{J}}
\end{align*}
which indicates that the eigenvalues form a pair of purely imaginary complex conjugates. As the ratio $\gamma = \frac{k_p}{k_s}$ crosses the critical value $1 - \frac{\partial c^s}{\partial s}$, the eigenvalues cross the imaginary axis. This behavior characterizes a Hopf bifurcation~\cite{strogatzNonlinearDynamicsChaos2015,wigginsIntroductionAppliedNonlinear2003}, confirming that a limit cycle emerges when the gain ratio $\gamma$ drops below the critical value $\mu_4/\mu_2-1$.
\end{proof}

\section{Simulation Results}

This section presents simulation results of the original high-dimensional system~\eqref{eq:maxminH} to illustrate the theoretical predictions. Several key behaviors analytically identified in the two-dimensional analysis persist in the high-dimensional setting.

\begin{figure}[ht!]
    \centering
    \begin{subfigure}{0.45\textwidth}
        \includegraphics[width=\textwidth]{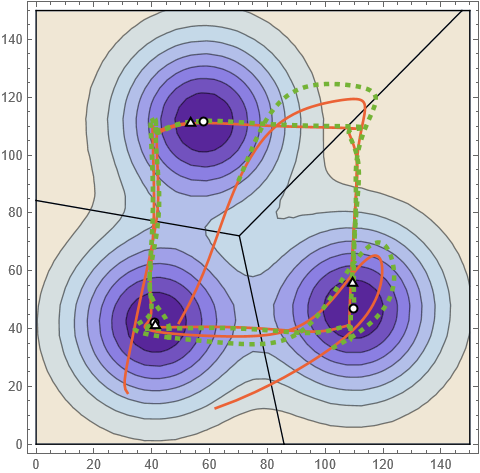}
        \caption{$k_p=1$, $k_s=1$, $\gamma=1$}
        \label{subfig:gamma=1 trajectories}
    \end{subfigure}
    \begin{subfigure}{0.45\textwidth}
        \includegraphics[width=\textwidth]{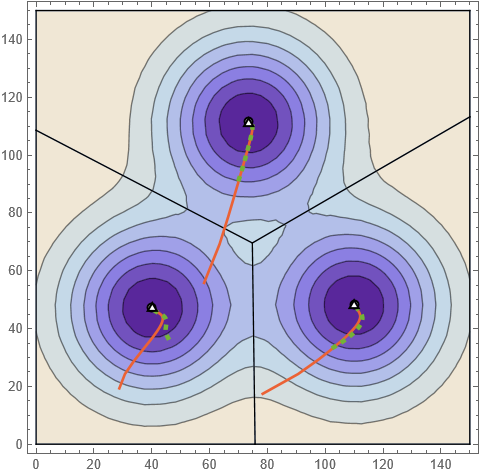}
        \caption{$k_p=3$, $k_s=1$, $\gamma=3$}
        \label{subfig:gamma=3 trajectories}
    \end{subfigure}
    \caption{Trajectories of adversarial (circles) and ordinary (triangles) agents under different control gain ratios. Adversarial motion induces the risk density, while ordinary agents generate the Voronoi partition. Green dashed and orange solid curves indicate adversarial and ordinary trajectories, respectively.}
    \label{fig:trajectories}
\end{figure}

\begin{figure}[ht!]
    \centering
    \includegraphics[width=0.85\textwidth]{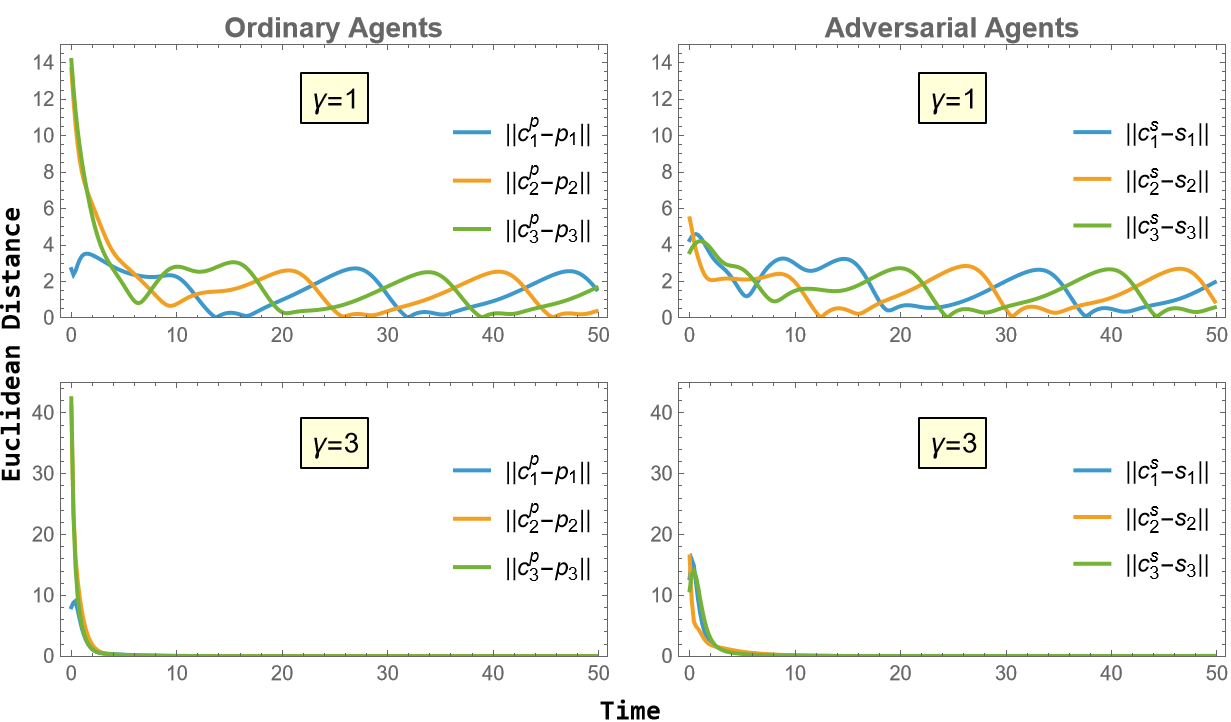}
    \caption{Evolution of the Euclidean distance between the ordinary and adversarial agents and their respective centroids.
    }
    \label{fig:distance}
\end{figure}

\begin{figure}[ht]
    \centering
    \includegraphics[width=0.6\textwidth]{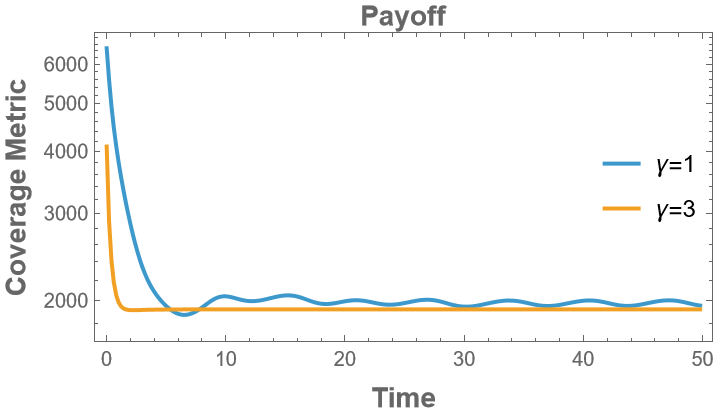}
    \caption{Evolution of game payoff under different control gain ratios.}
    \label{fig:H}
\end{figure}

\begin{figure}[ht!]
    \centering
        \includegraphics[width=0.8\linewidth]{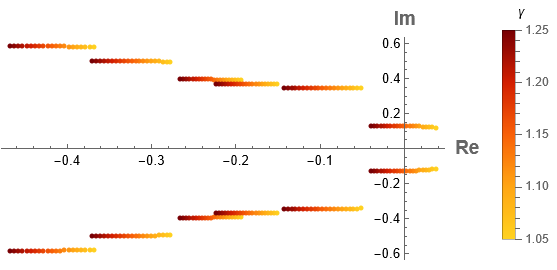}
        \label{subfig:eigenvalues}
    \caption{Eigenvalues' trajectory as gain ratio $\gamma$ varies with $3$ ordinary agents and $3$ adversary agents.}
    \label{fig:Eigenvalues}
\end{figure}

\begin{figure}[ht!]
    \centering
    \begin{subfigure}{0.45\textwidth}
        \includegraphics[width=\textwidth]{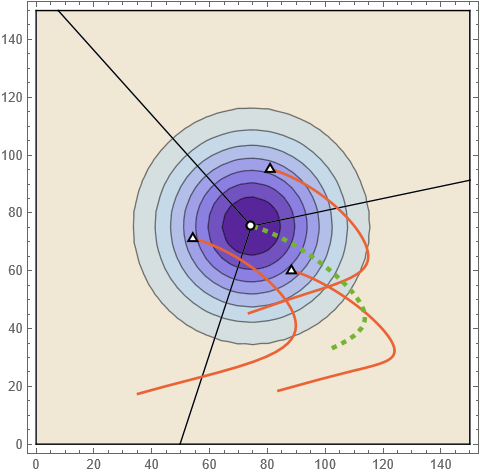}
        \caption{$\sigma=20$}
        \label{subfig:NonNE}
    \end{subfigure}
    \begin{subfigure}{0.45\textwidth}
        \includegraphics[width=\textwidth]{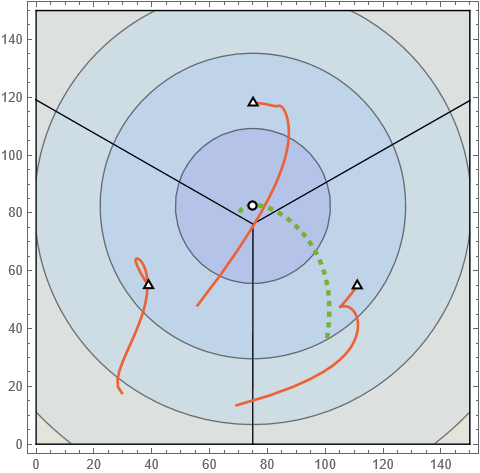}
        \caption{$\sigma=60$}
        \label{subfig:NE}
    \end{subfigure}
    \caption{Trajectories of three agents and one adversary. Both cases converge to fixed configurations, but only (b) satisfies the conditions of a Nash equilibrium.}
    \label{fig:3v1 NE}
\end{figure}

\subsection{Bifurcation of the Dynamical System}

This section focuses on the bifurcation arising in the dynamical system governed by the coupled controllers~\eqref{eq:up} and~\eqref{eq:us}. Simulations are conducted in a square planar domain $\Omega$ of side length $150$ (in a consistent system of units), involving three agents initially positioned at $(33, 17)$, $(45, 35)$, and $(54, 10)$, and an equal number of adversaries located at $(100, 32)$, $(45, 33)$, and $(69, 89)$. We first present two sets of simulation results for different bifurcation ratios $\gamma = k_p/k_s$ to illustrate the bifurcation of the dynamics.

Figure~\ref{fig:trajectories} compares the trajectories of ordinary and adversarial agents under different control gain ratios. In Fig.~\ref{subfig:gamma=1 trajectories}, the gains are set to $k_p = 1$ and $k_s = 1$, yielding a gain ratio of $\gamma = 1$. In this adversarially dominated regime, the coupled networks converge to a periodic chase-evasion motion characterized by a limit cycle. In Fig.~\ref{subfig:gamma=3 trajectories}, the control gain of the ordinary agents is increased to $k_p = 3$, resulting in $\gamma = 3$. In this case, the system converges to a fixed configuration.

Figure~\ref{fig:distance} illustrates the evolution of the Euclidean distances between the agents and their corresponding centroids. When the gain ratio is $\gamma = 1$, the distances exhibit sustained periodic oscillations. Increasing the gain ratio to $\gamma = 3$ leads to the convergence of the distances to zero, indicating stabilization of the system as the ratio between $k_p$ and $k_s$ increases. In this regime, the system converges to a fixed configuration, where the ordinary agents approach the centroidal Voronoi configuration $\myvec{c}^p$, and the adversarial agents converge to the corresponding equilibrium point $\myvec{c}^s$.

Figure~\ref{fig:H} shows the area coverage metric $\mathcal{H}$ for different bifurcation factors $\gamma$, plotted on a logarithmic scale. For $\gamma = 1$, $\mathcal{H}$ exhibits a fluctuating pattern due to the chase–evade interaction. As $\gamma$ increases to $3$, the coverage metric converges to a constant value of approximately $1900$.

Fig.~\ref{fig:Eigenvalues} shows the eigenvalue trajectory of the Jacobian matrix as the gain ratio $\gamma$ varies near the critical value. The fixed configuration is first computed at $\gamma=1.2$, and the Jacobian is then evaluated at this nominal configuration while treating $\gamma$ as a varying parameter. As $\gamma$ decreases, a pair of complex-conjugate eigenvalues crosses the imaginary axis, indicating the onset of oscillatory behavior consistent with a Hopf bifurcation.

\subsection{Nash Equilibrium}

It is important to emphasize that the stabilization of the dynamical system is not necessarily equivalent to a Nash equilibrium. To determine whether a converged fixed configuration constitutes a local Nash equilibrium, we examine the diagonal blocks of the Hessian matrix of the game payoff.
Specifically, the Hessian matrix of the payoff at the fixed point $(\myvec{c}^p_i,\myvec{c}^s_j)$, $\forall i\in\{1,\dots,M\},\ j\in\{1,\dots,N\}$, for the original system~\eqref{eq:H}, is given by
\begin{align}\label{eq:hessian}
    \nabla^2\mathcal{H}=
    \begin{bmatrix}
        \mathcal{H}_{pp} & \mathcal{H}_{ps} \\
        \mathcal{H}_{sp} & \mathcal{H}_{ss}
    \end{bmatrix}
    \in\mathbb{R}^{(2M+2N)\times(2M+2N)}
\end{align}
The details of each block are provided in the Appendix.

Figure~\ref{fig:3v1 NE} displays two sets of simulation results with three ordinary agents and one adversary. The agents and adversary are initialized with identical configurations and control gains, while the covariance $\sigma$ of the adversary’s risk density is varied.  

In Figure~\ref{subfig:NonNE}, the covariance is set to $\sigma=20$. The ordinary agents converge to $(54.1,71.2)$, $(80.8,95.2)$, and $(88.2,60.1)$, while the adversarial agent converges to $(74.3,75.5)$, for which the Hessian satisfies $\mathcal{H}_{pp}$ indefinite and $\mathcal{H}_{ss}$ negative definite. Hence, the fixed point does not constitute a Nash equilibrium.  

In contrast, when the covariance is increased to $\sigma=60$ in figure~\ref{subfig:NE}, thereby reducing the ratio between the size of the workspace and the spread of the Gaussian density, the ordinary agents converge to $(38.9,54.9)$, $(75.0,119.1)$, and $(111.1,54.9)$, while the adversarial agent converges to $(74.3,75.5)$.

At this fixed configuration, the corresponding Hessian matrix satisfies $\mathcal{H}_{pp} \succ 0$ and $\mathcal{H}_{ss} \prec 0$, confirming that the configuration constitutes a local Nash equilibrium of the two-player zero-sum game.

In summary, when the adversary's risk density is modeled with a large spread relative to the workspace---so that the density varies gently across the domain (in the two-dimensional case, $\epsilon<\epsilon_c$)---the coupled dynamical system is stable and converges to a fixed configuration, which corresponds to a Nash equilibrium of the differential game. In contrast, in the complementary regime (in the two-dimensional case, $\epsilon>\epsilon_c$), a Hopf bifurcation arises as the gain ratio $\gamma$ varies. In particular, large values of $\gamma$ are expected when the regular agents have greater actuation authority, faster sensing and control update rates, or more efficient information exchange than the adversary. Conversely, small values of $\gamma$ correspond to adversary-dominated regimes, e.g., due to delayed feedback, limited control authority of the ordinary agents, or highly agile adversarial dynamics.

\section{Conclusion}
This paper studied area coverage control in the presence of adaptive adversarial agents from a game-theoretic and dynamical systems perspective. By formulating the problem as a two-player zero-sum game and analyzing the resulting gradient-descent-ascent dynamics, we showed that the collective behavior of the system is governed by the control gain ratio between ordinary and adversarial agents. In particular, a Hopf bifurcation gives rise to a transition from periodic chase-evasion trajectories under adversarial dominance to stable fixed spatial configurations when ordinary agents possess sufficient control authority. We characterized the structure and existence conditions of equilibrium configurations, under which ordinary agents form centroidal Voronoi tessellations and adversarial agents converge to their corresponding equilibrium points. These theoretical predictions were further validated through numerical simulations. Future work includes extending this framework to settings with incomplete or uncertain information, in which ordinary and adversarial agents may operate under partial observability and limited knowledge of the environment or opposing strategies.

\appendix
\section*{Appendix}

\section{Gradient of the Game Payoff $\mathcal{H}$ with respect to Adversary State $\myvec{s}_j$}
\label{appendixA}

The payoff $\mathcal{H}$ depends on the adversary state $\myvec{s}_j$ through the risk density $\bar{\phi}$. Since
\begin{align}
    \frac{\partial \phi_j}{\partial \myvec{s}_j}
    =
    \phi_j \boldsymbol{\Sigma}^{-1}(\myvec{q}-\myvec{s}_j),
\end{align}
the gradient of $\mathcal{H}$ with respect to $\myvec{s}_j$ is
\begin{align}
    \frac{\partial \mathcal{H}}{\partial \myvec{s}_j}
    =
    \sum_{i=1}^{M}
    \int_{\mathcal{V}_i}
    f(r_i)\phi_j
    \boldsymbol{\Sigma}^{-1}(\myvec{q}-\myvec{s}_j)
    d\Omega .
\end{align}
Let $\hat{\phi}_{ij}=f(r_i)\phi_j$, and define $\hat{m}_{ij} = \int_{\mathcal{V}_i}\hat{\phi}_{ij}d\Omega$, $    \hat{\myvec{c}}_{ij} = \int_{\mathcal{V}_i}\myvec{q}\hat{\phi}_{ij}d\Omega/\int_{\mathcal{V}_i}\hat{\phi}_{ij}d\Omega$.
Then
\begin{align}
    \frac{\partial \mathcal{H}}{\partial \myvec{s}_j}
    =
    \boldsymbol{\Sigma}^{-1}
    \sum_{i=1}^{M}
    \hat{m}_{ij}(\hat{\myvec{c}}_{ij}-\myvec{s}_j).
\end{align}
Equivalently, by defining the weighted centroid
\begin{align}
    \myvec{c}^s_j
    =
    \frac{
    \sum_{i=1}^{M}\hat{m}_{ij}\hat{\myvec{c}}_{ij}
    }{
    \sum_{i=1}^{M}\hat{m}_{ij}
    },
\end{align}
the gradient can be written as
\begin{align}
    \frac{\partial \mathcal{H}}{\partial \myvec{s}_j}
    =
    \boldsymbol{\Sigma}^{-1}
    \sum_{i=1}^{M}
    \hat{m}_{ij}(\myvec{c}^s_j-\myvec{s}_j),
\end{align}
which gives \eqref{eq:gradHs3}.

\section{Jacobian of the Coupled Dynamics}
\label{appendixB}

Substituting the gradient descent-ascent controllers into the integrator dynamics yields
\begin{align}
    \myvec{f}(\myvec{x})
    =
    \begin{bmatrix}
    \dot{\myvec{P}}\\
    \dot{\myvec{S}}
    \end{bmatrix}
    =
    \begin{bmatrix}
    k^p(\myvec{C}^{p}-\myvec{P})\\
    k^s(\myvec{C}^{s}-\myvec{S})
    \end{bmatrix},
\end{align}
where $\myvec{x}=[\myvec{P},\myvec{S}]^\top$. The Jacobian has the block form
\begin{align}
    \nabla \myvec{f}
    =
    \begin{bmatrix}
    \frac{\partial \dot{\myvec{P}}}{\partial \myvec{P}} & \frac{\partial \dot{\myvec{P}}}{\partial \myvec{S}} \\
    \frac{\partial \dot{\myvec{S}}}{\partial \myvec{P}} & \frac{\partial \dot{\myvec{S}}}{\partial \myvec{S}}
    \end{bmatrix}
    =
    \begin{bmatrix}
    k^p\left(\frac{\partial \myvec{C}^p}{\partial \myvec{P}}-\myvec{I}_{2M}\right)
    &
    k^p\frac{\partial \myvec{C}^p}{\partial \myvec{S}}
    \\
    k^s\frac{\partial \myvec{C}^s}{\partial \myvec{P}}
    &
    k^s\left(\frac{\partial \myvec{C}^s}{\partial \myvec{S}}-\myvec{I}_{2N}\right)
    \end{bmatrix}.
\end{align}
Equivalently, the individual blocks are
\begin{subequations}
\begin{align}
    \left[\frac{\partial \dot{\myvec{P}}}{\partial \myvec{P}}\right]_{ik}
    &=
    k^p\left(
    \frac{\partial \myvec{c}^p_i}{\partial \myvec{p}_k}
    -
    \delta_{ik}\myvec{I}_2
    \right),\\
    \left[\frac{\partial \dot{\myvec{P}}}{\partial \myvec{S}}\right]_{ik}
    &=
    k^p
    \frac{\partial \myvec{c}^p_i}{\partial \myvec{s}_k},\\
    \left[\frac{\partial \dot{\myvec{S}}}{\partial \myvec{P}}\right]_{jk}
    &=
    k^s
    \frac{\partial \myvec{c}^s_j}{\partial \myvec{p}_k},\\
    \left[\frac{\partial \dot{\myvec{S}}}{\partial \myvec{S}}\right]_{jk}
    &=
    k^s\left(
    \frac{\partial \myvec{c}^s_j}{\partial \myvec{s}_k}
    -
    \delta_{jk}\myvec{I}_2
    \right).
\end{align}
\end{subequations}
Using the standard Leibniz rule for Voronoi partitions, the derivatives of the Voronoi centroids with respect to the agents' states are
\begin{subequations}
\begin{align}
    \frac{\partial\myvec{c}^p_i}{\partial\myvec{p}_i}
    &=
    -\sum_{k\in N_{\mathcal{V}_i}}\int_{\partial\mathcal{V}_{ik}}
    \frac{(\myvec{q}-\myvec{c}^p_i)(\myvec{p}_i-\myvec{q})^\top}{m_i\|\myvec{p}_k-\myvec{p}_i\|}\bar{\phi} \, d\Omega,\\
    \frac{\partial\myvec{c}^p_i}{\partial\myvec{p}_k}
    &=
    \int_{\partial\mathcal{V}_{ik}}\frac{(\myvec{q}-\myvec{c}^p_i)(\myvec{p}_k-\myvec{q})^\top}{m_i\|\myvec{p}_k-\myvec{p}_i\|}\bar{\phi} \, d\Omega,\qquad k\neq i .
\end{align}
\end{subequations}
The derivatives involving the adversary-dependent centroids are
\begin{subequations}
\begin{align}
    \frac{\partial \myvec{c}^s_j}{\partial \myvec{p}_k}
    &=
    -\frac{2}{\sum_{i=1}^{M}\hat{m}_{ij}}
    \int_{\mathcal{V}_k}
    \phi_j
    (\myvec{q}-\myvec{c}^s_j)
    (\myvec{q}-\myvec{p}_k)^\top
    d\Omega,\\
    \frac{\partial \myvec{c}^p_i}{\partial \myvec{s}_k}
    &=
    \frac{1}{m_i}
    \left(
    \int_{\mathcal{V}_i}
    \phi_k
    (\myvec{q}-\myvec{c}^p_i)
    (\myvec{q}-\myvec{s}_k)^\top
    d\Omega
    \right)
    \boldsymbol{\Sigma}^{-1},\\
    \frac{\partial\myvec{c}^s_j}{\partial\myvec{s}_j}
    &=
    \frac{1}{\sum_{i=1}^{M}\hat{m}_{ij}}
    \left(
    \sum_{i=1}^{M}
    \int_{\mathcal{V}_i}
    \hat{\phi}_{ij}
    (\myvec{q}-\myvec{c}^s_j)
    (\myvec{q}-\myvec{s}_j)^\top
    d\Omega
    \right)
    \boldsymbol{\Sigma}^{-1},\\
    \frac{\partial\myvec{c}^s_j}{\partial\myvec{s}_k}
    &=0,\qquad k\neq j .
\end{align}
\end{subequations}

\section{Hessian of the Game Payoff}
\label{appendixC}

We provide the expression of the Hessian matrix~\eqref{eq:hessian} of the game payoff.

The agent-agent block is given by
\begin{align}
    [\mathcal{H}_{pp}]_{ij}
    =
    \frac{\partial^2\mathcal{H}}
    {\partial\myvec{p}_i\partial\myvec{p}_j^\top}
    =
    -2m_i
    \left(
    \frac{\partial\myvec{c}^p_i}{\partial\myvec{p}_j}
    -
    \delta_{ij}\myvec{I}_2
    \right).
\end{align}
The adversary-adversary block is
\begin{subequations}
\begin{align}
    [\mathcal{H}_{ss}]_{jj}
    &=
    \sum_{i=1}^{M}
    \int_{\mathcal{V}_i}
    \hat{\phi}_{ij}
    (\myvec{q}-\myvec{c}^s_j)
    (\myvec{q}-\myvec{s}_j)^\top
    d\Omega
    -
    \myvec{I}_2,\\
    [\mathcal{H}_{ss}]_{jk}
    &=0,\qquad k\neq j .
\end{align}
\end{subequations}
The cross blocks are
\begin{subequations}
\begin{align}
    [\mathcal{H}_{ps}]_{ik}
    &=
    -2
    \left(
    \int_{\mathcal{V}_i}
    \phi_k
    (\myvec{q}-\myvec{c}^p_i)
    (\myvec{q}-\myvec{s}_k)^\top
    d\Omega
    \right)
    \boldsymbol{\Sigma}^{-1},\\
    [\mathcal{H}_{sp}]_{jk}
    &=
    -2
    \boldsymbol{\Sigma}^{-1}
    \left(
    \int_{\mathcal{V}_k}
    \phi_j
    (\myvec{q}-\myvec{c}^s_j)
    (\myvec{q}-\myvec{p}_k)^\top
    d\Omega
    \right).
\end{align}
\end{subequations}
At the fixed configuration in \eqref{eq:equilibrium}, these cross blocks satisfy
\begin{align}
    [\mathcal{H}_{ps}]_{ik}
    =
    [\mathcal{H}_{sp}]_{ki}^{\top}.
\end{align}

\bibliographystyle{plain}        
\bibliography{AreaCoverageGame}           

\end{document}